\newtheorem{definition}{Definition}
\newtheorem{lemma}[definition]{Lemma}
\newtheorem{theorem}[definition]{Theorem}
\newtheorem{conjecture}[definition]{Conjecture}
\def\squareforqed{\hbox{\rlap{$\sqcap$}$\sqcup$}}
\def\qed{\ifmmode\squareforqed\else{\unskip\nobreak\hfil
\penalty50\hskip1em\null\nobreak\hfil\squareforqed
\parfillskip=0pt\finalhyphendemerits=0\endgraf}\fi}
\def\endenv{\ifmmode\;\else{\unskip\nobreak\hfil
\penalty50\hskip1em\null\nobreak\hfil\;
\parfillskip=0pt\finalhyphendemerits=0\endgraf}\fi}
\newenvironment{proof}{\noindent \textbf{{Proof.~} }}{\qed}
\def\Dbar{\leavevmode\lower.6ex\hbox to 0pt
{\hskip-.23ex\accent"16\hss}D}
\def\bpf{\begin{proof}}
\def\epf{\end{proof}}
\newcommand{\nc}{\newcommand}
\newcommand{\bra}[1]{\langle{#1}|}
\newcommand{\ket}[1]{|{#1}\rangle}
\newcommand{\proj}[1]{|{#1}\rangle \langle {#1}|}
\newcommand{\ketbra}[2]{|{#1}\rangle \! \langle{#2}|}
\newcommand{\braket}[2]{\langle{#1}|{#2}\rangle}
\newcommand{\abs}[1]{\left\lvert {#1} \right\rvert}
\newcommand{\etal}{{\sl et~al.}}
\nc{\bbA}{\mathbb{A}} \nc{\bbB}{\mathbb{B}} \nc{\bbC}{\mathbb{C}}
\nc{\bbD}{\mathbb{D}} \nc{\bbE}{\mathbb{E}} \nc{\bbF}{\mathbb{F}}
\nc{\bbG}{\mathbb{G}} \nc{\bbH}{\mathbb{H}} \nc{\bbI}{\mathbb{I}}
\nc{\bbJ}{\mathbb{J}} \nc{\bbK}{\mathbb{K}} \nc{\bbL}{\mathbb{L}}
\nc{\bbM}{\mathbb{M}} \nc{\bbN}{\mathbb{N}} \nc{\bbO}{\mathbb{O}}
\nc{\bbP}{\mathbb{P}} \nc{\bbQ}{\mathbb{Q}} \nc{\bbR}{\mathbb{R}}
\nc{\bbS}{\mathbb{S}} \nc{\bbT}{\mathbb{T}} \nc{\bbU}{\mathbb{U}}
\nc{\bbV}{\mathbb{V}} \nc{\bbW}{\mathbb{W}} \nc{\bbX}{\mathbb{X}}
\nc{\bbZ}{\mathbb{Z}}
\nc{\bA}{{\bf A}} \nc{\bB}{{\bf B}} \nc{\bC}{{\bf C}}
\nc{\bD}{{\bf D}} \nc{\bE}{{\bf E}} \nc{\bF}{{\bf F}}
\nc{\bG}{{\bf G}} \nc{\bH}{{\bf H}} \nc{\bI}{{\bf I}}
\nc{\bJ}{{\bf J}} \nc{\bK}{{\bf K}} \nc{\bL}{{\bf L}}
\nc{\bM}{{\bf M}} \nc{\bN}{{\bf N}} \nc{\bO}{{\bf O}}
\nc{\bP}{{\bf P}} \nc{\bQ}{{\bf Q}} \nc{\bR}{{\bf R}}
\nc{\bS}{{\bf S}} \nc{\bT}{{\bf T}} \nc{\bU}{{\bf U}}
\nc{\bV}{{\bf V}} \nc{\bW}{{\bf W}} \nc{\bX}{{\bf X}}
\nc{\bZ}{{\bf Z}}
\nc{\bmA}{{\bm A}} \nc{\bmB}{{\bm B}} \nc{\bmC}{{\bm C}}
\nc{\bmD}{{\bm D}} \nc{\bmE}{{\bm E}} \nc{\bmF}{{\bm F}}
\nc{\bmG}{{\bm G}} \nc{\bmH}{{\bm H}} \nc{\bmI}{{\bm I}}
\nc{\bmJ}{{\bm J}} \nc{\bmK}{{\bm K}} \nc{\bmL}{{\bm L}}
\nc{\bmM}{{\bm M}} \nc{\bmN}{{\bm N}} \nc{\bmO}{{\bm O}}
\nc{\bmP}{{\bm P}} \nc{\bmQ}{{\bm Q}} \nc{\bmR}{{\bm R}}
\nc{\bmS}{{\bm S}} \nc{\bmT}{{\bm T}} \nc{\bmU}{{\bm U}}
\nc{\bmV}{{\bm V}} \nc{\bmW}{{\bm W}} \nc{\bmX}{{\bm X}}
\nc{\bmZ}{{\bm Z}}
\nc{\cA}{{\cal A}} \nc{\cB}{{\cal B}} \nc{\cC}{{\cal C}}
\nc{\cD}{{\cal D}} \nc{\cE}{{\cal E}} \nc{\cF}{{\cal F}}
\nc{\cG}{{\cal G}} \nc{\cH}{{\cal H}} \nc{\cI}{{\cal I}}
\nc{\cJ}{{\cal J}} \nc{\cK}{{\cal K}} \nc{\cL}{{\cal L}}
\nc{\cM}{{\cal M}} \nc{\cN}{{\cal N}} \nc{\cO}{{\cal O}}
\nc{\cP}{{\cal P}} \nc{\cQ}{{\cal Q}} \nc{\cR}{{\cal R}}
\nc{\cS}{{\cal S}} \nc{\cT}{{\cal T}} \nc{\cU}{{\cal U}}
\nc{\cV}{{\cal V}} \nc{\cW}{{\cal W}} \nc{\cX}{{\cal X}}
\nc{\cZ}{{\cal Z}} \nc{\cY}{{\cal Y}}
\def\bea{\begin{eqnarray}}
\def\eea{\end{eqnarray}}
\def\beq{\begin{equation}}
\def\eeq{\end{equation}}
\def\bal{\begin{aligned}}
\def\eal{\end{aligned}}
\def\bma{\begin{bmatrix}}
\def\ema{\end{bmatrix}}
\def\rank{\mathop{\rm rank}}
\def\min{\mathop{\rm min}}
\def\tr{{\rm Tr}}
\def\dg{\dagger}
\def\ra{\rightarrow}
\def\ox{\otimes}
\def\lin{\mathop{\rm span}}
\def\a{\alpha}
\def\b{\beta}
\def\g{\gamma}
\def\d{\delta}
\def\e{\epsilon}
\def\r{\rho}
\def\ph{\varphi}
\def\L{\Lambda}
\begin{document}


\title{Construction of genuine multipartite entangled states}

\author{Yi Shen}\email[]{yishen@buaa.edu.cn}
\affiliation{School of Mathematics and Systems Science, Beihang University, Beijing 100191, China}

\author{Lin Chen}\email[]{linchen@buaa.edu.cn (corresponding author)}
\affiliation{School of Mathematics and Systems Science, Beihang University, Beijing 100191, China}
\affiliation{International Research Institute for Multidisciplinary Science, Beihang University, Beijing 100191, China}



\date{\today}

\begin{abstract}

Genuine multipartite entanglement is of great importance in quantum information, especially from the experimental point of view. Nevertheless, it is difficult to construct genuine multipartite entangled states systematically, because the genuine multipartite entanglement is unruly. We propose another product based on the Kronecker product in this paper. The Kronecker product is a common product in quantum information with good physical interpretation. We mainly investigate whether the proposed product of two genuine multipartite entangled states is still a genuine entangled one. We understand the entanglement of the proposed product better by characterizing the entanglement of the Kronecker product. Then we show the proposed product is a genuine multipartite entangled state in two cases. The results provide a systematical method to construct genuine multipartite entangled states of more parties.


\end{abstract}

\keywords{genuine multipartite entanglement, biseparable states, Kronecker product}
\maketitle

\tableofcontents

\section{Introduction}
\label{sec:intro}
The essence of quantum entanglement, recognized by Einstein, Podolsky, Rosen (EPR), and Schr\"odinger \cite{EPR1947,Sch1935} has puzzled scientists for several decades. Entanglement, which involves nonclassical correlations between subsystems, plays a central role in every aspect of quantum information theory and the foundations of quantum mechanics \cite{nielsen00,Hor09}. Not only of great importance in theory, quantum entanglement has recently been regarded as physical resource. Lots of experiments show that entanglement has plenty potential for many quantum information processing tasks, including quantum cryptography \cite{qucrypto02}, quantum teleportation \cite{telepor04}, quantum key distribution \cite{qkd09}, and dense coding \cite{superdense08}. Genuine entanglement, as a kind of special multipartite entanglement, is considered to be the most important resource, and has been used in various experiments \cite{gedense06,sdsexge09,8ghz11}. Hence, it is essential to experimentally prepare the genuine entanglement of as many qubits as possible. So far, genuine multipartite entangled (GME) states in the form of Greenberger-Horne-Zeilinger (GHZ) states have been reported with 10 superconducting qubits, 14 trapped ions, and 18 photonic qubits \cite{10qubitge17,14qubitge11,18qubit18}. Recently M. Gong~\etal have realized the creation and verification of a $12$-qubit linear cluster (LC) state, the largest GME state reported in solid-state quantum systems \cite{12qubitge2019}.

It is known that to determine a bipartite state is separable or entangled is an NP-hard problem \cite{sepnphard}. Obviously, for the multipartite case, the relation between local and global properties of quantum states, and the interplay between classical and quantum properties of correlations are much more complicated \cite{marginlc14}. To characterize a multipartite state, it is necessary to distinguish between genuine multipartite entanglement and biseparable entanglement. Suppose $\rho$ is a multipartite state. Then $\rho$ is said to be biseparable if it can be written as a convex linear combination of states, each of which is separable with respect to some partition. Otherwise $\rho$ is a GME state. For instance, a tripartite state $\rho_{ABC}$ is biseparable, if it admits the following decomposition \cite{3qubitclass01}. 
\beq
\label{eq:defbiseptri}
\rho^{bs}=p_1\rho_{A|BC}^{sep}+p_2\rho_{B|AC}^{sep}+p_3\rho_{C|AB}^{sep},
\eeq
where $\rho_{A|BC}^{sep}$ means it is separable with respect to the fixed partition $A|BC$, i.e., $\rho_{A|BC}^{sep}=\a_A\ox\b_{BC}$, the same for $\rho_{B|AC}^{sep}$ and $\rho_{C|AB}^{sep}$.

The characterization of multipartite entanglement, especially genuine mulitpartite entanglement, turns out to be quite challenging. In spite of massive efforts, there are little progress on the separability for multipartite states. Some inequalities were formulated to guarantee the biseparability, and thus the violation of the inequalities would imply the genuine mulitpartite entanglement. \cite{parsep08,gmecri10}. As we know, a bipartite separable state is necessarily a positive partial transpose (PPT) state \cite{PPT961,PPT962}. To generalize the PPT criterion to the mulipartite states, the concept of PPT mixtures was proposed \cite{pmix10}. For example, we call a tripartite state $\rho_{ABC}$ a $PPT$ $mixture$, if it can be written as 
\beq
\label{eq:defpptmtri}
\rho^{pmix}=p_1\rho_{A|BC}^{ppt}+p_2\rho_{B|AC}^{ppt}+p_3\rho_{C|AB}^{ppt}.
\eeq
 If a state isn't a PPT mixture, it shouldn't be a biseparable one. It is thus a GME state by definition. Although there exist states which are PPT mixtures but not biseparable states \cite{pmixcex16}, it indeed provides a relaxed method to characterize genuine multipartite entanglement due to the fact that the set of PPT mixtures can be fully characterized with the method of linear semidefinite programing (SDP) \cite{SDP96}. Further, by the approach of PPT mixtures the necessary biseparability criterion for permutationally invariant states were presented \cite{pmixpi13}. In addition, several genuine entanglement witnesses were presented to detect the GME states \cite{digmewit11,pmapswit14,optgew16}. They all have their own advantages to detect some classes of multipartite states. 


Therefore, it is difficult to construct GME states. As far as we know, there are scattered results on the construction of GME states. In experiment, the more parties share the genuine entanglement, the more useful such genuine entanglement is. However, it is more difficult to characterize the genuine entanglement of more parties. For this reason, in this work we try to figure out how to construct a GME state of more parties from two GME states of less parties. Due to this motivation, we first propose a different product of two states dependent on the Kronecker product. Denote the proposed product by $\a\ox_{K_c}\b$ for $m$-partite state $\alpha$ and $n$-partite $\beta$, $m\leq n$. It is defined by only applying the Kronecker product on some subsystems of $\a$ and $\b$. Thus, the product $\a\ox_{K_c}\b$ is a multipartite state of partites more than $n$. We mainly investigate whether the two GME states $\a$ and $\b$ can guarantee $\a\ox_{K_c}\b$ is a GME state. This problem is formulated by Conjecture \ref{cj:aotimesb}. Let's recall the two common products in quantum information theory to better understand the product $\a\ox_{K_c}\b$, and thus Conjecture \ref{cj:aotimesb}. The first one is the tensor product, denoted by $\a\ox\b$, which represents an $(n+m)$-partite state. The second one is the Kronecker product, denoted by $\a\ox_K\b$, which represents an $n$-partite state supported on the Kronecker product of the two Hilbert spaces which $\alpha$ and $\beta$ are supported on respectively. Since $\a\ox_{K_c}\b$ is closely connected to $\a\ox_K\b$ in form, the characterization of the multipartite entanglement of $\a\ox_K\b$  enables us to see the features of the multipartite entanglement of $\a\ox_{K_c}\b$ better. When $\a$ and $\b$ are both pure states, we characterize the separability of $\alpha\ox_K\beta$ by Lemma \ref{le:indexset}. When $\a$ and $\b$ are both mixed states, the characterization of the entanglement of $\alpha\ox_K\beta$ is given by Lemma \ref{le:rdmge}.
Then we focus on Conjecture \ref{cj:aotimesb}. We study it from the point of ranges of $\alpha$ and $\beta$, and derive our first main result Theorem \ref{le:conjmulti} for Conjecture \ref{cj:aotimesb} (i). We next show our second main result Theorem \ref{le:2r2oxr2}  for Conjecture \ref{cj:aotimesb} (ii) using the SLOCC equivalence. Our main results present a systematical method to construct GME states of more parties.
Moreover, there is another fundamental problem related to the two products. That is to determine the Schmidt ranks of $\ket{\psi}\ox\ket{\phi}$ and $\ket{\psi}\ox_K\ket{\phi}$ for given $\ket{\psi}$ and $\ket{\phi}$. Although it is also known as an NP-hard problem, there have been some attempts at this problem in recent years \cite{tensorrankcm18,tensorranklc18}.

The remainder of the paper is organized as follows. In Sec. \ref{sec:pre}, we define GME states and the Kronecker product formally, and introduce the background information related to them as preliminaries. In Sec. \ref{sec:res=multige}, we characterize the entanglement of $\a\ox_K\b$ for pure $\a,~\b$, and mixed $\a,~\b$ in Sec. \ref{subsec:pure} and Sec. \ref{subsec:mix} respectively. Next, in Sec. \ref{sec:n+2}, we investigate the main problem Conjecture \ref{cj:aotimesb} in this paper, which involves a novel product based on the Kronecker product. We partially prove Conjecture \ref{cj:aotimesb}, and thus present a method to systematically construct GME states of more parties. Finally, the concluding remarks are given in Sec. \ref{sec:con}.


\section{Preliminaries}
\label{sec:pre}

Suppose $\rho_{A_1A_2\cdots A_n}$ is an $n$-partite state on the Hilbert space $\cH_{A_1A_2\cdots A_n}:=\cH_{A_1}\ox\cH_{A_2}\ox\cdots\ox\cH_{A_n}$, where the dimension of $\cH_{A_i}$ is $d_i$ for any $A_i$. Denote $\rho_{A_1A_2\cdots A_n}$ by $\rho$ for simplicity, and denote by $\rho_{A_{j_1}A_{j_2}\cdots A_{j_k}}$ the reduced state of $\rho$. Unless stated otherwise, we shall not normalize quantum states for convenience. So $\rho=\sum_{j=1}^k\proj{\psi_j}$. Denote by $\cR(\rho)$ the range of $\rho$. By definition $\cR(\rho)=\lin\{\ket{\psi_j}\}_{j=1}^k$. 

In order to characterize the multipartite entanglement, we first define the composite systems.
\begin{definition}
\label{def:composite}
Suppose $A_1,A_2\cdots,A_n$, and $B_1,B_2,\cdots,B_m$ are $n$ systems, and $m$ systems respectively, where $m\leq n$. Let $\cS\subset\{1,2,\cdots,n\}$ be a subset. Denote by $\bar{\cS}$ the complement of $\cS$. 

(i) Define the composite system as $A_{\cS}:=\ox_{i\in\cS}A_i$ supported on the space $\ox_{i\in\cS}\cH_{A_i}$, and $A_{\bar{\cS}}:=\ox_{j\in\bar{\cS}}A_j$ supported on the space $\ox_{j\in\bar{\cS}}\cH_{A_j}$.

(ii) Let $\cM=\{1,2,\cdots, m\}$. Define the composite system as $(AB)_{\cS}:=\big(\ox_{j\in \cS\backslash\cM}A_j\big)\ox\big(\ox_{i\in\cS\cap\cM}(A_i\ox B_i)\big)$ supported on the corresponding space.
\end{definition}
 
 Then recall the definitions of fully separable states, biseparable states and genuine entangled states, respectively.
\begin{definition}
\label{def:gebifu}
Suppose $\rho=\sum_{j=1}^k\proj{\psi_j}$ is an $n$-partite state.

(i) $\rho$ is fully separable if we can take each $\ket{\psi_j}$ to be fully factorized, e.g., $\ket{a_1^j}_{A_1}\ket{a_2^j}_{A_2}\cdots\ket{a_n^j}_{A_n}$.

(ii) $\rho$ is biseparable if we can take each $\ket{\psi_j}$ to be unentangled in at least one bipartition, e.g., $\ket{\ph_j}_{A_{\cS}}\ket{\phi_j}_{A_{\bar{\cS}}}$. Further we have $\rho=\sum_j \rho_{\cS_j|\bar{\cS}_j}^{sep}$, where each $\rho_{\cS_j|\bar{\cS}_j}^{sep}$ is bipartite separable in the bipartition $A_{\cS_j}|A_{\bar{\cS}_j}$.

(iii) $\rho$ is genuine entangled if for any ensemble there is at least one $\ket{\psi_j}$ that is not factorized with respect to any bipartition, i.e., if it is not biseparable.
\qed
\end{definition}
For the bipartite case, a biseparable state shall indicate a fully separable one, and a pure biseparable state shall indicate a product state. In the following part of this paper, to be uniform with the multipartite case we will use pure biseparable states to denote product states.

We next introduce a concept on equivalence. The separability of a given state is invarient under this equivalence.
\begin{definition}
\label{df:equivalence}
We refer to SLOCC as stochastic local operations and classical communications.

(i) Two $n$-partite pure states $\ket{\a},\ket{\b}$ are locally equivalent when there exists a product unitary operation $X=X_1\ox...\ox X_n$
such that $\ket{\a}=X\ket{\b}$. For simplicity we write $\ket{\a}\sim\ket{\b}$.

(ii) Two $n$-partite pure states $\ket{\a},\ket{\b}$ are SLOCC equivalent when there exists a product invertible operation $Y=Y_1\ox...\ox Y_n$
such that $\ket{\a}=Y\ket{\b}$. For simplicity we write $\ket{\a}\sim_s\ket{\b}$.

We further extend the above definitions to spaces. Let $V=\lin\{\ket{\a_1},...,\ket{\a_m}\}$ and $W=\lin\{\ket{\b_1},...,\ket{\b_m}\}$ be two $n$-partite subspaces of $m$-dimension. 

(iii) $V$ and $W$ are locally equivalent when there exist a product unitary operation $X$ such that $\ket{\a_i}\propto X\ket{\b_i}$ for any $i$. For simplicity we write $V\sim W$. 

(iv) $V$ and $W$ are SLOCC equivalent when there exist a product invertible operation $Y$ such that 	$\ket{\a_i}\propto Y\ket{\b_i}$ for any $i$. For simplicity we write $V\sim_s W$. 
\qed
\end{definition}
By Definition \ref{df:equivalence}, one can show that the sets of fully separable states, biseparable states and genuine entangled states are all closed under local equivalence and SLOCC equivalence. 

It is known that all bipartite NPT states can be
converted into NPT Werner states $\r_w(p,d) \in \mathcal{H}_A \ox
\mathcal{H}_B$, $\mathrm{Dim} \mathcal{H}_A = \mathrm{Dim}
\mathcal{H}_B = d $ using LOCC. It implies that a bipartite NPT state is equivalent to an NPT Werner state under LOCC equivalence. 
Recall the definition of the Werner state.
\begin{definition}
\label{def:werner}
The Werner state on $\cB(\bbC^d\otimes\bbC^d)$ is defined as
\beq
  \label{eq:werner}
\r_w(d,p) 
:=
{1\over d^2+pd}
\big(
I_d \ox I_d 
+ 
p \sum^{d-1}_{i,j=0} \ket{i,j}\bra{j,i}
\big),
\eeq
where the parameter $p\in[-1,1]$.
\end{definition}

The Werner state is closely related to the distillability problem which lies in the heart of quantum entanglement theory. The following is a well-known lemma on the distillability.
\begin{lemma}
\label{le:distillwerner}
The Werner state $\r_w(d,p)$ is

(i) separable when $p \in [-\frac1d, 1]$;

(ii) NPT and one-copy undistillable when $p \in [-\frac12, -\frac1d)$;

(iii) NPT and one-copy distillable when $p \in [-1,-\frac12)$.
\end{lemma}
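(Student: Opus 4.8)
The plan is to reduce everything to the spectral structure of the swap (flip) operator. First I would write $F := \sum_{i,j=0}^{d-1}\ket{i,j}\bra{j,i}$ so that $\r_w(p,d) = \frac{1}{d^2+pd}(I_d\ox I_d + pF)$, and record the two facts that drive the whole argument. First, $F$ acts as $+1$ on the symmetric subspace and $-1$ on the antisymmetric subspace, so $I_d\ox I_d+pF=(1+p)P_{\mathrm{sym}}+(1-p)P_{\mathrm{asym}}\geq0$ for $p\in[-1,1]$, confirming $\r_w$ is a genuine state. Second, the partial transpose satisfies $F^{T_B}=\proj{\O}$ with $\ket{\O}=\sum_{i=0}^{d-1}\ket{ii}$ the unnormalized maximally entangled vector. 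Hence $\r_w(p,d)^{T_B}=\frac{1}{d^2+pd}(I_d\ox I_d+p\proj{\O})$ is a rank-one perturbation of a multiple of the identity, whose eigenvalues are immediate: $\frac{1}{d^2+pd}$ on the orthogonal complement of $\ket{\O}$, and $\frac{1+pd}{d^2+pd}$ on $\ket{\O}$. Since the normalization is positive, $\r_w(p,d)$ is PPT exactly when $1+pd\geq0$, i.e.\ $p\geq-1/d$, and NPT for $p<-1/d$; this already establishes the NPT claims in parts (ii) and (iii).

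For the separability in part (i) I would argue by twirling. The projection $\cT(\s)=\int dU\,(U\ox U)\s(U\ox U)^\dagger$ onto the $U\ox U$-invariant states maps the whole state space onto the Werner line, preserves $\tr(\s F)$ (because $F$ commutes with every $U\ox U$), and sends separable states to separable states, being an average of local unitary conjugations. A short computation gives $\tr(\r_w(p,d)F)=\frac{1+pd}{d+p}$, which increases from $0$ to $1$ as $p$ runs over $[-1/d,1]$. It therefore suffices to exhibit separable seed states realizing every value of $\tr(\,\cdot\,F)$ in $[0,1]$: the product state $\proj{\ps}\ox\proj{\ps}$ gives $\tr(\,\cdot\,F)=1$, while $\proj{\ps}\ox\proj{\ph}$ with $\braket{\ps}{\ph}=0$ gives $\tr(\,\cdot\,F)=0$, and convex combinations of these two separable seeds realize every intermediate value. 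Twirling them and using linearity produces exactly $\r_w(p,d)$ for all $p\in[-1/d,1]$, proving separability.

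For parts (ii) and (iii) I would invoke the standard characterization that a state is one-copy distillable precisely when its partial transpose is negative on some vector of Schmidt rank at most two across the $A|B$ cut. For a unit vector $\ket{\ps}$ one has $\bra{\ps}\r_w(p,d)^{T_B}\ket{\ps}=\frac{1}{d^2+pd}\big(1+p\,\abs{\braket{\ps}{\O}}^2\big)$, so (since $p<0$ in the relevant range and the prefactor is positive) this is negative for some Schmidt-rank-two $\ket{\ps}$ iff $\max_{\ps}\abs{\braket{\ps}{\O}}^2>1/\abs{p}$. Because the normalized $\ket{\O}/\sqrt d$ is maximally entangled with all squared Schmidt coefficients equal to $1/d$, the optimal Schmidt-rank-two vector captures the two largest of them, giving $\max_{\ps}\abs{\braket{\ps}{\O}}^2=2$, attained by $\tfrac{1}{\sqrt2}(\ket{00}+\ket{11})$. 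Hence one-copy distillability holds iff $2>1/\abs{p}$, i.e.\ $p<-1/2$, which is exactly part (iii); for $p\in[-1/2,-1/d)$ the partial transpose is negative but never on a Schmidt-rank-two vector, so the state is NPT yet one-copy undistillable, which is part (ii) (note this interval is empty when $d=2$, consistently with qubit NPT states all being distillable).

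I expect the genuinely delicate step to be the separability direction of part (i): the NPT and one-copy computations are essentially forced once $F^{T_B}=\proj{\O}$ and the Schmidt-rank-two overlap bound are in hand, whereas separability requires actually exhibiting a decomposition rather than a sign check. The twirling route is the cleanest, but it hinges on verifying that $\cT$ lands on the Werner line and that separable seeds attain $\tr(\,\cdot\,F)=0$ but no smaller value — equivalently that the flip expectation of any product state $\proj{\ps}\ox\proj{\ph}$ equals $\abs{\braket{\ps}{\ph}}^2\geq0$ — which is precisely what pins the separability threshold at $p=-1/d$.
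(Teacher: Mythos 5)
Your proof is correct, and there is nothing in the paper to compare it against: the paper states this lemma without proof, citing it as a well-known result about Werner states. Your argument is the standard literature proof and all of its computations check out: $F^{T_B}=\proj{\O}$ gives the partial-transpose spectrum $\{\tfrac{1}{d^2+pd},\tfrac{1+pd}{d^2+pd}\}$ and hence NPT exactly for $p<-\tfrac1d$; the $U\ox U$-twirl argument with separable seeds realizing every flip expectation $\tr(\sigma F)=\abs{\braket{\ps}{\ph}}^2\in[0,1]$ pins separability exactly on $[-\tfrac1d,1]$ (the monotonicity of $p\mapsto\tfrac{1+pd}{d+p}$ you use is easily verified, its derivative being $\tfrac{d^2-1}{(d+p)^2}>0$); and the Schmidt-rank-two criterion for one-copy distillability reduces to $\max\abs{\braket{\ps}{\O}}^2=2$ over normalized Schmidt-rank-$\le2$ vectors, which follows from $\abs{\tr M}\le\sqrt{\rank M}\,\norm{M}_F$ and yields the threshold $p=-\tfrac12$ for parts (ii) and (iii). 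The only point you invoke without proof is the equivalence between one-copy distillability and negativity of $\rho^{T_B}$ on a Schmidt-rank-two vector, but that is the standard characterization (it is essentially the definition, after noting that local rank-two projections commute appropriately with partial transposition), so relying on it is legitimate here.
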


It is also known the set of LOCC on a bipartite system is a strict subset of that of bipartite separable operations, see the paragraph below \cite[Eq. (84)]{Hor09}. The bipartite separable operation (and thus the bipartite LOCC operation) can be written as
\begin{eqnarray}
\label{eq:sepop}
\L(\r)=\sum_i 
(A_i\otimes B_i)^\dg 
\r 
(A_i\otimes B_i)
\end{eqnarray}
for any bipartite state $\r$.
It follows from Lemma \ref{le:distillwerner} that there exists an LOCC operation $\L_l$ such that $\L_l(\rho)=\r_w(p,d), \quad p\in[-1,-\frac1d)$ for any NPT bipartite state $\rho$. Since both the sets of biseparable states and genuine entangled states are closed under SLOOC equivalence (and thus under LOCC operations), we can restrict ourselves into NPT Werner states when considering NPT bipartite states.

We now consider another $m$-partite state $\sigma_{B_1B_2\cdots B_m}$ supported on the Hilbert space $\cH_{B_1B_2\cdots B_m}$. Recall the two common products of $\cH_{A_1A_2\cdots A_n}$ and $\cH_{B_1B_2\cdots B_m}$ in quantum information.  The first product is the tensor product $\cH_{A_1A_2\cdots A_n}\ox\cH_{B_1B_2\cdots B_m}$. Denote by $\rho\ox\sigma$ an $(n+m)$-partite state supported on the space $\cH_{A_1A_2\cdots A_n}\ox\cH_{B_1B_2\cdots B_m}$.
The second tensor product, which we call the Kronecker product, is defined as follows.  Assume that $m\le n$ (We can always achieve this by permuting the factors $\cH_{A_1A_2\cdots A_n}$ and $\cH_{B_1B_2\cdots B_m}$). Then:
\beq
\label{eq:krtenspace}
\cH_{A_1A_2\cdots A_n}\otimes_{K}\cH_{B_1B_2\cdots B_m}:=\big(\otimes_{i=1}^m (\cH_{A_i}\otimes \cH_{B_i})\big)\otimes \big(\otimes_{i'=m+1}^{n} \cH_{A_{i'}}\big),
\eeq 
where the second tensor product is omitted if $m=n$. 
Denote by $\rho\ox_K \sigma$ a state supported on the Hilbert space defined by Eq. \eqref{eq:krtenspace}. By definition it indicates that $\rho\ox_K \sigma$ is an $n$-partite state of the systems $(A_1\ox B_1),\cdots,(A_m\ox B_m), A_{m+1},\cdots,A_n$.


\section{Characterization of the entanglement of the Kronecker product of two states}
\label{sec:res=multige}

The separability of $\rho\ox\sigma$ is determined by the separabilities of $\rho$ and $\sigma$. However, the separability of $\rho\ox_K \sigma$ isn't related to the separabilities of $\rho$ and $\sigma$. For example, we will show $\rho\ox_K \sigma$ isn't necessarily biseparable even if $\rho$ and $\sigma$ are both biseparable. 
So it's interesting to know whether $\rho\ox_K \sigma$ is genuine entangled or biseparable for given two states $\rho$ and $\sigma$. In Sec. \ref{subsec:pure} we characterize the separability of $\ket{\psi}\ox_{K}\ket{\phi}$ for pure $\ket{\psi}$ and $\ket{\phi}$ using the conception of complete partitions we introduce. In Sec. \ref{subsec:mix} we derive several conditions when $\a\ox_K\b$ is genuine entangled for mixed $\a$ and $\b$.

\subsection{Pure states}
\label{subsec:pure}

By definition a pure multipartite state $\ket{\psi}$ is biseparbale if it is separable in some bipartition. Otherwise it is genuine entangled. It is relatively easier to characterize the entanglement of a pure multipartite state. So we start from addressing pure states. Suppose $\ket{\psi}_{A_1A_2\cdots A_n}$ and $\ket{\phi}_{B_1B_2\cdots B_m}$ are two pure states. In this subsection we show how to characterize the separability of $\ket{\psi}\ox_K\ket{\phi}$ from the reduced states of $\ket{\psi}$ and $\ket{\phi}$, and thus we present a method to construct pure $n$-partite genuine entangled states for any $n$.

By definition the partition of parties is essential to the separability of a pure multipartite state. To explain our idea we first introduce the concept of complete partitions as follows.
\begin{definition}
\label{def:Npartition}
(i) We call $\cP_n^r:=\{\cS_1,\cS_2,\cdots,\cS_r\}$ a partition of $\{1,2,\cdots,n\}$ if $\cS_1,\cS_2,\cdots,\cS_r$ are $r$ nonempty and pairwisely disjoint sets such that $\cup_{j=1}^r \cS_j=\{1,2,\cdots,n\}$.

(ii) For any pure state $\ket{\psi}_{A_1A_2\cdots A_n}$, a partition $\cP_n^r=\{\cS_1,\cS_2,\cdots,\cS_r\}$ is a complete partition of $\ket{\psi}$ if it satisfies the following two conditions.

(ii.1) $\ket{\psi}$ is a $r$-partite fully separable state of system $A_{\cS_1},A_{\cS_2},\cdots,A_{\cS_r}$. 

(ii.2) Each reduced state $\ket{\psi}_{A_{\cS_j}}$ is genuine entangled.
\end{definition}

The following lemma shows the separability of $\ket{\psi}\ox_K \ket{\phi}$ for two pure multipartite states $\ket{\psi}$ and $\ket{\phi}$ using the definition of complete partitions.

\begin{lemma}
\label{le:indexset}
 Let $m\leq n$. Suppose $\ket{\psi}_{A_1A_2\cdots A_n}$ is an $n$-partite pure state, and $\ket{\phi}_{B_1B_2\cdots B_m}$ is an $m$-partite pure state. Assume $\cP_n^r=\{\cS_1,\cS_2,\cdots,\cS_r\}$ is the complete partition of $\ket{\psi}$, and $\cQ_m^k=\{\cT_1,\cT_2,\cdots,\cT_k\}$ is the complete partition of $\ket{\phi}$. 

 (i) Then $\ket{\psi}\ox_K\ket{\phi}$ is separable in the partition $\cO_n^l=\{\cR_1,\cR_2,\cdots,\cR_l\} ~(l\geq 2)$ if and only if $\cR_j=\cup_{u\in \cX_j}\cS_u=\cup_{v\in \cY_j}\cT_v, ~\forall \cR_j$, where $\{\cX_1,\cdots,\cX_l\}$ and $\{\cY_1,\cdots,\cY_l\}$ are the partitions of $\{1,2,\cdots,r\}$ and $\{1,2,\cdots,k\}$ respectively.

 (ii) Then $\ket{\psi}\ox_K\ket{\phi}$ is genuine entangled if and only if there is no partition $\cO_n^2=\{\cR_1,\cR_2\}$ such that $\cR_j=\cup_{u\in \cX_j}\cS_u=\cup_{v\in \cY_j}\cT_v, ~j=1,2$, where $\{\cX_1,\cX_2\}$ and $\{\cY_1,\cY_2\}$ are the bipartitions of $\{1,2,\cdots,r\}$ and $\{1,2,\cdots,k\}$ respectively.
\end{lemma}

\begin{proof}
(i) We first prove the "If" part. Since $\cR_j=\cup_{u\in \cX_j}\cS_u=\cup_{v\in \cY_j}\cT_v, ~\forall \cR_j$, we have $\ket{\psi}$ is separable in the partition $\cO_n^l$, and so is $\ket{\phi}$. By definition we have $\ket{\psi}\ox_K\ket{\phi}$ is separable in the partition $\cO_n^l$.

Next we prove the "Only if" part. It follows that $\ket{\psi}\ox_K\ket{\phi}$ is separable in the partition $\cO_n^l$ if and only if both $\ket{\psi}$ and $\ket{\phi}$ are separable in the partition $\cO_n^l$. Since $\ket{\psi}$ is separable in the partition $\cO_n^l$, we have $\cR_j=\cup_{u\in \cX_j}\cS_u, ~\forall \cR_j$, by Definition \ref{def:Npartition} (ii). Similarly, we have $\cR_j=\cup_{v\in \cY_j}\cT_u, ~\forall \cR_j$. Hence, the "Only if" part holds.

(ii) By the definition of genuine entanglement, one can similarly show this claim from Lemma \ref{le:indexset} (i).

This completes the proof.
\end{proof}

From Lemma \ref{le:indexset} (ii) one can further show $\ket{\psi}\ox_K\ket{\phi}$ is genuine entangled if and only if (i) both $\ket{\psi}$ and $\ket{\phi}$ are genuine entangled, or (ii) $\ket{\psi}$ is entangled in all bipartitions where $\ket{\phi}$ is a product state, and vice versa. Therefore, Lemma \ref{le:indexset} (ii) presents an operational method to construct a pure $n$-partite genuine entangled state from two pure states which could not be genuine entangled. For example, suppose $\cP_6^3=\{\{1,2\},\{3,4\},\{5,6\}\}$ and $\cQ_6^3=\{\{1,4\},\{2,5\},\{3,6\}\}$ are the complete partitions of $\ket{\psi}$ and $\ket{\phi}$ respectively. By Lemma \ref{le:indexset} (ii) $\ket{\psi}\ox_K\ket{\phi}$ is genuine entangled. We further emphasize the complete partition is essential to Lemma \ref{le:indexset}. The reason is only when we know the complete partition of a pure state can we obtain all the bipartitions where it is a product state.  For instance, suppose $\ket{\psi}=\ket{0010}+\ket{0001}+\ket{0110}+\ket{0101}$, and $\ket{\phi}=\ket{0010}+\ket{1011}-\ket{0110}-\ket{1111}$. One can verify $\ket{\psi}$ is a biseparable state of systems $(A_1A_2)$ and $(A_3A_4)$, and $\ket{\phi}$ is a biseparable state of systems $(B_1B_4)$ and $(B_2B_3)$. However, the complete partition of $\ket{\psi}$ is $\cP_4^3=\{\{1\},\{2\},\{34\}\}$, and the complete partition of $\ket{\phi}$ is $\cQ_4^3=\{\{14\},\{2\},\{3\}\}$. Although there is no partition $\cO_4^2$ for $\{\{1,2\},\{3,4\}\}$ and $\{\{1,4\},\{2,3\}\}$, there is the partition $\cO_4^2=\{\{2\},\{1,3,4\}\}$ for $\cP_4^3$ and $\cQ_4^3$. Therefore, $\ket{\psi}\ox_K\ket{\phi}$ is a biseparable state in the partition $\cO_4^2=\{\{2\},\{1,3,4\}\}$.


\subsection{Mixed states}
\label{subsec:mix}

In this subsection we characterize the multipartite entanglement of $\a_{A_1\cdots A_n}\ox_K \b_{B_1\cdots B_m}$ for mixed $\a$ and $\b$. This case is quite different from the case of pure states, because a mixed state has infinite types of linear combinations from the well-known Wootters decomposition \cite{hjw1993}. The following lemma shows some sufficient conditions when $\a\ox_K\b$ is genuine entangled from different angles.

\begin{lemma}
\label{le:rdmge}
(i) $\a\ox_K \b$ is an $n$-partite genuine entangled state if $\a$ is $n$-partite genuine entangled.

(ii) Suppose $\b$ is an $m$-partite fully separable state. Then $\a\ox_K \b$ is an $n$-partite genuine entangled (resp. biseparable, fully separable) state if and only if $\a$ is an $n$-partite genuine entangled (resp. biseparable, fully separable) state. 

(iii) Suppose $\rho_{A_1\cdots A_n}$ is an $n$-partite state. Then $\rho$ is $n$-partite genuine entangled if any basis of $\cR(\rho)$ contains a pure genuine entangled state, i.e., $\cR(\rho)$ isn't spanned by pure biseparable states.
\end{lemma}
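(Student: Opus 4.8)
The plan is to treat all three parts through one unifying observation: as an operator, $\a\ox_K\b$ coincides with the ordinary tensor product $\a\ox\b$ after the tensor factors are regrouped so that $B_i$ sits next to $A_i$ for each $i\le m$. Concretely, consistent with Eq. \eqref{eq:krtenspace}, if $\a=\sum_j\proj{\psi_j}$ and $\b=\sum_l\proj{\phi_l}$ then $\a\ox_K\b=\sum_{j,l}\proj{\psi_j\ox_K\phi_l}$. The single tool I would extract is the partial trace over the added systems, $\tr_{B_1\cdots B_m}(\a\ox_K\b)=(\tr\b)\,\a$, a positive scalar multiple of $\a$. The delicate point I must pin down is that this partial trace acts \emph{locally} on each composite system $(A_iB_i)$: in any bipartition $(AB)_{\cS}|(AB)_{\bar{\cS}}$ the factor $B_i$ always lies on the same side as its partner $A_i$, so tracing out the $B$-systems sends a state separable across $(AB)_{\cS}|(AB)_{\bar{\cS}}$ to one separable across the induced bipartition $A_{\cS}|A_{\bar{\cS}}$, and likewise preserves full separability. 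I expect this bookkeeping of how the $B$-systems distribute over a bipartition of the composite systems to be the only genuinely delicate step; the rest is routine.

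For part (i) I would argue by contradiction. Suppose $\a\ox_K\b$ were biseparable, $\a\ox_K\b=\sum_j\rho_{\cS_j|\bar{\cS}_j}^{sep}$ with each summand separable across $(AB)_{\cS_j}|(AB)_{\bar{\cS}_j}$. Applying $\tr_{B_1\cdots B_m}$ term by term and invoking the locality observation, each image is separable across $A_{\cS_j}|A_{\bar{\cS}_j}$, while the total image equals $(\tr\b)\,\a$. Hence $\a$ would be biseparable, contradicting that $\a$ is genuine entangled.

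For part (ii), the backward implications in the fully separable and biseparable cases are exactly the trace-out argument of part (i), since $(\tr\b)\,\a$ inherits full separability or biseparability from $\a\ox_K\b$. The forward implications I would obtain by direct construction. If $\a$ is fully separable, Kronecker-multiplying each factorized $\ket{\psi_j}$ with a factorized $\ket{\phi_l}$ (available because $\b$ is fully separable) yields a vector factorized across all $n$ composite systems $(A_iB_i)$, so $\a\ox_K\b$ is fully separable. If $\a=\sum_j\rho_{\cS_j|\bar{\cS}_j}^{sep}$ is biseparable, then $\b$, being fully separable, is in particular separable across the $B$-bipartition induced by $\cS_j$; the Kronecker product of two states separable across these matching bipartitions is separable across $(AB)_{\cS_j}|(AB)_{\bar{\cS}_j}$, and summing over $j$ shows $\a\ox_K\b$ is biseparable. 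The genuine-entangled equivalence then closes up: its forward direction is immediate from part (i), which assumes nothing about $\b$, and its backward direction is the contrapositive of the biseparable forward implication just proved.

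For part (iii) I would prove the contrapositive: if $\rho$ is biseparable then $\cR(\rho)$ is spanned by pure biseparable states. Taking a biseparable decomposition $\rho=\sum_t p_t\proj{\chi_t}$ with $p_t>0$ and each $\ket{\chi_t}$ pure biseparable, I would use the standard range identity $\cR(\rho)=\lin\{\ket{\chi_t}\}$: for positive semidefinite $\rho$, $\ket{v}\in\ker\rho$ iff $\bra{v}\rho\ket{v}=\sum_t p_t|\braket{\chi_t}{v}|^2=0$ iff $\ket{v}\perp\lin\{\ket{\chi_t}\}$, whence $\cR(\rho)=\ker(\rho)^\perp=\lin\{\ket{\chi_t}\}$. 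Thus $\cR(\rho)$ is spanned by the pure biseparable vectors $\ket{\chi_t}$, i.e. some basis of $\cR(\rho)$ consists of pure biseparable states, which is exactly the negation of the hypothesis. Contrapositively, if $\cR(\rho)$ is not spanned by pure biseparable states (equivalently, every basis contains a pure genuine entangled state), then $\rho$ is genuine entangled.
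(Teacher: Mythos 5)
Your proposal is correct and follows essentially the same route as the paper's proof: part (i) by contradiction via tracing out the $B$-systems (the paper's ``the reduced state $\a$ is biseparable'' is exactly your partial-trace locality observation, made explicit), part (ii) by combining (i) with the direct factorized construction, and part (iii) via the fact that the vectors of a biseparable decomposition span $\cR(\rho)$. The only differences are expository --- you spell out the range identity $\cR(\rho)=\lin\{\ket{\chi_t}\}$ and the biseparable/fully separable cases of (ii) that the paper dispatches with ``one can similarly prove'' --- so no further comparison is needed.
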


\begin{proof}
(i) We prove it by contradiction. It suffices to consider the case $m=n$. Suppose $\a\ox_K \b=\sum_j \proj{\psi_j}$ is biseparable. By definition we have each $\ket{\psi_j}$ is biseparable in the cut $(AB)_{\cS_j}|(AB)_{\bar{\cS}_j}$. It follows that the reduced state $\a$ is biseparable, which contradicts with $\a$ is $n$-partite genuine entangled. Therefore, $\a\ox_K \b$ is an $n$-partite genuine entangled state.


(ii) We only prove the genuine entangled case. One can similarly prove the biseparable and fully separable cases. The "if" part follows from (i). We next prove the "only if" part. If $\a$ is biseparable, it follows from $\b$ is $m$-partite fully separable that $\a\ox_K \beta$ is also biseparable. Then we obtain the contradiction. So the "only if" part holds. 

(iii) We prove it by contradiction. Suppose $\rho_{A_1\cdots A_n}$ is biseparable. Then one can write $\rho=\sum_{j=1}^s \proj{\psi_j}$, where each $\ket{\psi_j}$ is biseparable. Then the maximal linearly independent system of $\{\ket{\psi_1},\cdots,\ket{\psi_s}\}$ is a basis of the range of $\rho$. However, this basis contains no pure genuine entangled state, so we obtain a contradiction. Therefore, $\rho$ is an $n$-partite genuine entangled state.

This completes the proof.
\end{proof}

The above results partially reveal the separability of $\a\ox_K\b$. In the following section we initiate from the Kronecker product and develop a different product. It is indicated that this novel product could be used to construct GME states of more parties.


\section{Construct an $(n+2)$-partite genuine entangled state from two $(n+1)$-partite states}
\label{sec:n+2}

In this section we show how to construct an $(n+2)$-partite genuine entangled state from two $(n+1)$-partite states by involving the tensor product and the Kronecker product. To be specific, suppose $\a_{AC_{1,1}C_{1,2}\cdots C_{1,n}}$ and $\b_{BC_{2,1}C_{2,2}\cdots C_{2,n}}$ are two $(n+1)$-partite states supported on the Hilbert spaces $\cH_{AC_{1,1}C_{1,2}\cdots C_{1,n}}$ and $\cH_{BC_{2,1}C_{2,2}\cdots C_{2,n}}$ respectively. By definition $\a\ox_K\b$ is also an $(n+1)$-partite state of systems  $(AB)$ and $C_j$'s, where $C_j:=(C_{1,j}C_{2,j}), ~ 1\leq j\leq n$. To construct an $(n+2)$-partite state we shall apply the Kronecker product on the spaces $\cH_{C_{1,1}C_{1,2}\cdots C_{1,n}}$ and $\cH_{C_{2,1}C_{2,2}\cdots C_{2,n}}$ only as follows. 
\beq
\label{eq:defkronc}
\cH_{AC_{1,1}C_{1,2}\cdots C_{1,n}} \ox_{K_c}\cH_{BC_{2,1}C_{2,2}\cdots C_{2,n}}:=\cH_A\ox\cH_B\ox\big(\cH_{C_{1,1}C_{1,2}\cdots C_{1,n}}\ox_K\cH_{C_{2,1}C_{2,2}\cdots C_{2,n}}\big).
\eeq
Denote by $\a\ox_{K_c}\b$ a state supported on the space $\cH_{AC_{1,1}C_{1,2}\cdots C_{1,n}} \ox_{K_c}\cH_{BC_{2,1}C_{2,2}\cdots C_{2,n}}$. By definition $\a\ox_{K_c}\b$ is an $(n+2)$-partite state of systems $A,B$, and $C_j$'s, $1\leq j\leq n$.

If $\a\ox_{K_c}\b$ is an $(n+2)$-partite genuine entangled state for two $(n+1)$-partite genuine entangled states $\a$ and $\b$, it provides a systematical method to construct GME states of more parties.  We mainly investigate the following conjecture in this section. Conjecture \ref{cj:aotimesb} is the main problem in this paper. We present two main results Theorem \ref{le:conjmulti} and Theorem \ref{le:2r2oxr2} on Conjecture \ref{cj:aotimesb} (i) and (ii) respectively.
\begin{conjecture}
\label{cj:aotimesb}
(i) Suppose $\a_{AC_{1,1}C_{1,2}\cdots C_{1,n}}$ is an $(n+1)-$partite genuine entangled state, and $\b_{BC_{2,1}C_{2,2}\cdots C_{2,n}}$ can be taken as a bipartite entangled state of systems $B$ and $(C_{2,1}C_{2,2}\cdots C_{2,n})$. Then $\a\ox_{K_c}\b$ is an $(n+2)$-partite genuine entangled state of systems $A,B,C_1,C_2,\cdots C_n$, where $C_j:=(C_{1j}C_{2j}),~ 1\leq j\leq n$.

(ii) If $\a_{AC_1},\b_{BC_2}$ are both bipartite entangled states, then $\a_{AC_1}\otimes_{K_c}\b_{BC_2}$ is a tripartite genuine entangled state of systems $A,B$ and $(C_1C_2)$.
\end{conjecture}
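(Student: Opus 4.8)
The plan is to show that $\a\ox_{K_c}\b$, viewed as a tripartite state of $A$, $B$ and $C:=(C_1C_2)$, has no biseparable decomposition. As an operator it is just $\a_{AC_1}\ox\b_{BC_2}$ with the factors regrouped so that $C_1$ and $C_2$ together form the single party $C$. I would first dispose of the three ``single-cut'' obstructions: tracing out $B,C_2$ returns $\a_{AC_1}$ and tracing out $A,C_1$ returns $\b_{BC_2}$, and since a reduced state of a state separable in some cut is separable in the induced cut, the $A|C_1$-entanglement of $\a$ forbids separability across both $A|BC$ and $AB|C$, while the $B|C_2$-entanglement of $\b$ forbids $B|AC$. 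This only shows entanglement in every cut, which is necessary but far from sufficient; the substance of the proof is excluding convex mixtures of states separable in different cuts.

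The engine for that is SLOCC invariance. By the remark following Definition \ref{df:equivalence}, being genuinely entangled (equivalently, not biseparable) is preserved under SLOCC and LOCC, and any local operation on $A,C_1$ together with a local operation on $B,C_2$ is automatically local with respect to the tripartition $A|B|C$, because operations on $C_1$ and on $C_2$ act inside the single party $C$. Hence $\a\ox_{K_c}\b$ is genuinely entangled if and only if $\a'\ox_{K_c}\b'$ is, for any SLOCC images $\a',\b'$. When $\a$ and $\b$ are NPT I would invoke the distillation facts recalled before Lemma \ref{le:distillwerner}: there are LOCC maps sending them to NPT Werner states $\r_w(p,d_1)$ and $\r_w(q,d_2)$. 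Since LOCC cannot turn a biseparable state into a genuinely entangled one, it then suffices to prove that $\r_w(p,d_1)\ox_{K_c}\r_w(q,d_2)$ is genuinely entangled, and this conclusion pulls back to the original $\a,\b$.

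For the Werner target the transparent boundary case is $p=q=-1$, $d_1=d_2=2$, where both factors are maximally entangled pure states; then $\r_w(-1,2)\ox_{K_c}\r_w(-1,2)$ is itself pure with every single-party reduced state maximally mixed, hence entangled in all three cuts and therefore, being pure, genuinely entangled---exactly the complete-partition calculus of Lemma \ref{le:indexset}. For a full-rank mixed Werner target this pure shortcut is unavailable, and I would instead try to produce a genuine-entanglement witness, i.e. certify that the state is not a PPT mixture; the $U\ox U$ symmetry of each Werner factor, which survives as a product symmetry on $C=(C_1C_2)$, should collapse the governing semidefinite program to a low-dimensional problem solvable in closed form in $p,q,d_1,d_2$. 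I expect this is what the theorem attached to part (ii) carries out, most cleanly when both factors live on $2\times2$ systems, where entanglement coincides with being NPT and the Werner reduction is therefore lossless; part (i) would follow the same template, reducing the bipartite factor $\b$ to a Werner state and relying on the genuine entanglement of $\a$ for the cuts not involving $B$.

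The hard part, and the reason the statement is posed as a conjecture, is the bound-entangled regime. If $\a$ and $\b$ are entangled but PPT, then $(\a\ox_{K_c}\b)^{T_A}=\a^{T_A}\ox\b$, $(\a\ox_{K_c}\b)^{T_B}=\a\ox\b^{T_B}$ and $(\a\ox_{K_c}\b)^{T_C}=\a^{T_{C_1}}\ox\b^{T_{C_2}}$ are all positive, so $\a\ox_{K_c}\b$ is PPT in every cut and is automatically a PPT mixture; neither a partial-transpose witness nor the Werner reduction can detect its genuine entanglement. For this case I would fall back on the range criterion Lemma \ref{le:rdmge}(iii), attempting to prove that $\cR(\a)\ox\cR(\b)$, read in the regrouped tripartite space, admits no basis of pure biseparable states---and it is precisely verifying this that I expect to be the genuine obstacle.
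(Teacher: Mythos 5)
You should be clear at the outset about status: the statement is posed in the paper as a \emph{conjecture}, and the paper itself proves it only in special cases (Theorem \ref{le:conjmulti}, Theorem \ref{le:2r2oxr2}) together with reductions (Lemmas \ref{le:conjequiv} and \ref{le:npt}); your proposal, by its own admission, also stops short of a full proof, so there is no complete argument on either side. Within that caveat, your scaffolding largely matches the paper's: the observation that operations on $A,C_1$ and on $B,C_2$ are local for the tripartition $A|B|(C_1C_2)$, combined with LOCC conversion of NPT states to NPT Werner states, is exactly the content of Lemma \ref{le:npt} (which additionally uses Lemma \ref{le:conjequiv}, mixing with $xI$ as in Eq. \eqref{eq:npt1/2-1}, to push the Werner parameter to the boundary $p=\e-\frac1d$ — a step you omit), and your single-cut reductions, with the correct remark that they are necessary but far from sufficient, are sound.

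There are, however, two concrete misalignments. First, you predict that the theorem attached to part (ii) is a symmetry-collapsed SDP or witness computation for Werner products; it is not, and indeed the paper never resolves the Werner case — Lemma \ref{le:npt} only \emph{reduces} it to the boundary instance $\r_w(2,\e-\frac12)\ox_{K_c}\r_w(2,\e-\frac12)$ and leaves that open. What the paper actually proves, Theorem \ref{le:2r2oxr2}, is the case $\rank\a=\rank\b=2$ by a quite different elementary mechanism: Lemma \ref{le:rangeslocc} puts the biseparably-spanned range into the SLOCC normal form $\lin\{\ket{00},\ket{11}\}$, so one may take $\a,\b$ of the form $(\ket{00}+\ket{11})(\bra{00}+\bra{11})+x\proj{00}$; the local projector $P=I_2\ox I_2\ox(\proj{0}+\proj{3})$ then maps $\a\ox_{K_c}\b$ to $\sigma=(\ket{000}+\ket{113})(\bra{000}+\bra{113})+c\proj{000}$, whose range contains exactly two product vectors and which is visibly not a mixture of them, so $\sigma$ (hence $\rho$) is GME. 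Your hoped-for closed-form SDP in $p,q,d_1,d_2$ is precisely what remains unproven. Second, your fallback for the hard regime — showing $\cR(\a)\ox\cR(\b)$ admits no basis of pure biseparable states and invoking Lemma \ref{le:rdmge}(iii) — is both weaker and harder than the paper's Theorem \ref{le:conjmulti}, which needs only $\cR(\a)$ to lack a biseparable spanning set and does \emph{not} apply the range criterion to the product: one assumes a biseparable decomposition $\a\ox_{K_c}\b=\sigma+\sum_j\proj{b_j}_B\ox\proj{\b_j}$, chooses $\ket{x}$ orthogonal to the reduced $\sigma$ but not to the remaining sum, and sandwiches to obtain $\b\propto\sum_j\proj{b_j}\ox\braket{x}{\b_j}\braket{\b_j}{x}$, a separable decomposition contradicting the $B|(C_{2,1}\cdots C_{2,n})$ entanglement of $\b$. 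Note also that whenever $\a$ has full rank (e.g., any Werner state) its range is the whole space, which \emph{is} spanned by product vectors, so your range-based fallback cannot even get started exactly where the conjecture is hardest — which is why the paper's remaining machinery (Lemma \ref{le:rank3totqbit} reducing rank three to two qubits, and the full-rank equivalence of Lemma \ref{le:conjequiv}(ii)) takes the routes it does.
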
 

Conjecture \ref{cj:aotimesb} (ii) is a special case of (i). We first consider the generic one. Inspired by Lemma \ref{le:rdmge} (iii) we try to attack it from the range of $\a$.

\begin{theorem}
\label{le:conjmulti}
(i) Conjecture \ref{cj:aotimesb} (i) holds if $\cR(\a)$ isn't spanned by pure biseparable states.

(ii) Conjecture \ref{cj:aotimesb} (ii) holds if either $\cR(\a_{AC_1})$ or $\cR(\b_{BC_2})$ isn't spanned by pure biseparable states.
\end{theorem}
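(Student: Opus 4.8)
My plan is to prove both parts by contradiction: assume $\a\ox_{K_c}\b$ is biseparable and contradict the hypotheses. I would first record two facts read off from \eqref{eq:defkronc}. (a) $\cR(\a\ox_{K_c}\b)=\lin\{\ket{u}\ox_{K_c}\ket{v}:\ket{u}\in\cR(\a),\ \ket{v}\in\cR(\b)\}$. (b) Tracing out $B$ together with all of $C_{2,1},\dots,C_{2,n}$ yields a state $\propto\a$, while tracing out $A$ together with all of $C_{1,1},\dots,C_{1,n}$ yields a state $\propto\b$; in particular $\a\ox_{K_c}\b$ is separable across the cut $B\,|\,AC_1\cdots C_n$ if and only if $\b$ is separable across $B\,|\,(C_{2,1}\cdots C_{2,n})$, so the bipartite entanglement of $\b$ forces $\a\ox_{K_c}\b$ to be entangled across that cut. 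I note at the outset that feeding $\a\ox_{K_c}\b$ directly into Lemma \ref{le:rdmge}(iii) is not by itself enough: its range can be spanned by pure biseparable states while the state is genuinely entangled (for instance when $\a$ is pure entangled and $\cR(\b)$ is spanned by product vectors), so the entanglement of $\b$ must enter beyond a naive range count.

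Writing a biseparable ensemble $\a\ox_{K_c}\b=\sum_t\proj{w_t}$, I classify each $\ket{w_t}$ by where $B$ sits in its cut: type (I) is the single cut $B\,|\,AC_1\cdots C_n$, and type (II) is every other cut. The engine is the contraction $P_{v_0}=\bra{v_0}$ against a completely factorized $\ket{v_0}=\ket{b_0}_B\otimes\ket{c_{0,1}}_{C_{2,1}}\otimes\cdots\otimes\ket{c_{0,n}}_{C_{2,n}}$ chosen not orthogonal to $\cR(\b)$; it satisfies $P_{v_0}(\ket{u}\ox_{K_c}\ket{v})=\braket{v_0}{v}\ket{u}$ and hence maps $\cR(\a\ox_{K_c}\b)$ onto $\cR(\a)$. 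Being a product functional, $P_{v_0}$ sends a type-(II) vector to a vector of $\a$ factorized across the induced cut of $A,C_{1,1},\dots,C_{1,n}$, and this induced cut is nontrivial exactly because the original cut was not $B\,|\,$rest; thus every type-(II) term lands on a pure biseparable vector of $\cR(\a)$. Since $\cR(\a)$ is assumed not spanned by pure biseparable states, the type-(II) images alone cannot span $\cR(\a)$, so at least one type-(I) term must occur and must carry a genuinely entangled direction of $\cR(\a)$.

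The decisive and, I expect, hardest step is to exclude such essential type-(I) terms using $\b$'s entanglement. The key rigidity is that a type-(I) vector $\ket{w_t}=\ket{b_t}_B\ket{\eta_t}\in\cR(\a\ox_{K_c}\b)$ is highly constrained: expanding $\ket{\eta_t}=\sum_p\ket{u_p}\ox_K\ket{f_p^{(t)}}$ in an orthonormal basis $\{\ket{u_p}\}$ of $\cR(\a)$ and matching $B$-slices forces every $\ket{b_t}_B\otimes\ket{f_p^{(t)}}$ to lie in $\cR(\b)$ with the common $B$-part $\ket{b_t}$. Consequently the second reduction carries the sum of the type-(I) terms to a state separable across $B\,|\,(C_{2,1}\cdots C_{2,n})$, and the same holds for terms grouping $B$ only with $A$; so the entanglement of $\b$ can be supplied solely by the type-(II) terms that group $B$ with at least one $C_j$. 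On the other hand, under the first reduction precisely those $B$-with-$C_j$ terms map to biseparable contributions to $\a$, so the non-biseparable directions of $\cR(\a)$ can come only from the type-(I) terms. The plan is then to play the two reductions against each other — the terms forced to carry $\b$'s entanglement and the terms forced to carry $\a$'s non-biseparability occupy disjoint cut-classes — and, via a support argument on the positive-semidefinite identities $\tr(\cdot)\propto\a$ and $\tr(\cdot)\propto\b$ (the rank-one case being the clean prototype, where the reduction forces all type-(II) mass to vanish and $\a\ox_{K_c}\b$ collapses to a state separable across $B\,|\,$rest, contradicting the entanglement of $\b$), to conclude that either $\cR(\a)$ is after all spanned by biseparable states or $\b$ is separable. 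Either alternative is a contradiction. Upgrading this support argument from the rank-one prototype to arbitrary $\cR(\a)$ is the genuine difficulty of the proof.

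For part (ii) I would specialize to $n=1$, where the merged party is $C=(C_1C_2)$ and $\a\ox_{K_c}\b$ is symmetric under the exchange $(\a,A,C_1)\leftrightarrow(\b,B,C_2)$. If $\cR(\a)$ is not spanned by pure biseparable (here simply product) states I run the argument above; if instead $\cR(\b)$ is not so spanned, the mirror-image argument with $A$ and $B$ interchanged applies, the remaining factor being bipartite entangled and thus playing the role that $\b$ played before. This symmetric dichotomy is exactly the ``either $\ldots$ or'' in the statement, so part (ii) follows from the $n=1$ instance of part (i).
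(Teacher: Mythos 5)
Your structural analysis is essentially correct and parallels the paper's setup: your type-(I)/type-(II) split is exactly the paper's decomposition $\a\ox_{K_c}\b=\sigma+\sum_j\proj{b_j}_B\ox\proj{\b_j}_{AC_1\cdots C_n}$ in Eq.~\eqref{eq:range=ces-multi}, your rigidity claim about type-(I) vectors is true, and your observation that $\a$'s non-biseparable directions and $\b$'s entanglement are forced into disjoint cut-classes is the right intuition. But the proposal has a genuine gap: the decisive step is never carried out. You write that ``the plan is then to play the two reductions against each other \ldots via a support argument'' and you explicitly concede that upgrading from the rank-one prototype to arbitrary $\cR(\a)$ ``is the genuine difficulty of the proof.'' That deferred step \emph{is} the theorem; nothing in the proposal supplies it. Moreover, the tool you chose cannot easily supply it: both of your reductions are partial traces (or product-functional contractions on the $B C_{2,1}\cdots C_{2,n}$ side), and tracing discards exactly the positivity bookkeeping needed to correlate the two sides of the ensemble, which is why your argument stalls beyond rank one.

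The missing idea, which is how the paper closes the argument in one stroke for arbitrary rank, is to contract on the $\a$ side with a sandwich $\bra{x}\cdot\ket{x}$ rather than on the $\b$ side with a trace. Each type-(II) term is product across a cut other than $B|\mathrm{rest}$, so its reduction to $A C_{1,1}\cdots C_{1,n}$ is product across a nontrivial induced cut; hence $\cR(\sigma_{AC_{1,1}\cdots C_{1,n}})\subseteq\cR(\a)$ is spanned by pure biseparable vectors, and the hypothesis that $\cR(\a)$ is \emph{not} so spanned guarantees a vector $\ket{x}\in\cR(\a)$ orthogonal to $\cR(\sigma_{AC_{1,1}\cdots C_{1,n}})$. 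Positive semidefiniteness then annihilates the whole type-(II) part at once, $\bra{x}\sigma\ket{x}=0$, while on the other side
\beq
\bra{x}(\a\ox_{K_c}\b)\ket{x}=\bra{x}\a\ket{x}\,\b\propto\b,
\qquad \bra{x}\a\ket{x}>0,
\eeq
so that $\b\propto\sum_j\proj{b_j}_B\ox\braket{x}{\b_j}\braket{\b_j}{x}$ is separable across $B|(C_{2,1}\cdots C_{2,n})$, contradicting the entanglement of $\b$. This is precisely where the range hypothesis enters, and no rank-one-to-general ``upgrading'' is needed. Your reduction of part (ii) to the $n=1$ symmetry of part (i) is fine once part (i) is completed, and it matches the paper's treatment.
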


\begin{proof}
(i) It follows from Lemma \ref{le:rdmge} (iii) that $\a$ is necessarily an $(n+1)-$partite genuine entangled state. We prove the assertion by contradiction. Suppose $\a\otimes_{K_c}\b$ isn't $(n+2)$-partite genuine entangled, and thus it is biseparable. By definition we write 
\beq
\label{eq:range=ces-multi}
\bal
&\a\otimes_{K_c}\b=\sigma+\sum_j \proj{b_j}_B\otimes\proj{\b_j}_{AC_1C_2\cdots C_n},
\eal
\eeq
where $\sigma$ is the sum of other sums with respect to all the bipartitions except the bipartition $B|AC_1C_2\cdots C_n$. Hence, the reduced state $\sigma_{AC_{1,1}C_{1,2}\cdots C_{1,n}}$ is biseparable. Denote by $(\b_j)_{AC_{1,1}C_{1,2}\cdots C_{1,n}}$ the reduced density operator of $\proj{\b_j}_{AC_1C_2\cdots C_n}$.
So 
\beq
\label{eq:alphaac1-multi}
\bal
\a&=\sigma_{AC_{1,1}C_{1,2}\cdots C_{1,n}}+\sum_j (\b_j)_{AC_{1,1}C_{1,2}\cdots C_{1,n}}.
\eal	
\eeq
Since $\cR(\a_{AC_{1,1}C_{1,2}\cdots C_{1,n}})$ is not spanned by pure biseparable states, there is a bipartite pure state $\ket{x}$ on the space $\cH_{AC_{1,1}C_{1,2}\cdots C_{1,n}}$ orthogonal to $\sigma_{AC_{11}C_{12}\cdots C_{1n}}$ in \eqref{eq:alphaac1-multi}, and not orthogonal to the second sum. Using \eqref{eq:range=ces-multi} we have
\begin{eqnarray}
\b_{BC_{2,1}C_{2,2}\cdots C_{2,n}}
\propto
\bra{x}(\a\otimes\b)\ket{x}
= 
\sum_j \proj{b_j}_{B}\otimes
\braket{x}{\b_j}\braket{\b_j}{x}.
\end{eqnarray}
It is a contradiction with the fact that $\b$ is a bipartite entangled state of systems $B$ and $(C_{21},C_{22},\cdots C_{2n})$. Therefore, $\a\otimes_{K_c}\b$ is an $(n+2)$-partite genuine entangled state. 

(ii) Since Conjecture \ref{cj:aotimesb} (ii) is the tripartite case of (i), one can show assertion (ii) holds directly from assertion (i).

This completes the proof.
\end{proof}


There are several classes of states whose ranges aren't spanned by pure biseparable states. For instance, a pure entangled state, a PPT entangled state constructed from a UPB, and an antisymmetric state. From Theorem \ref{le:conjmulti} it suffices to consider Conjecture \ref{cj:aotimesb} (i) when $\cR(\a)$ is spanned by pure biseparable states. In particular, we next investigate Conjecture \ref{cj:aotimesb} (ii) when $\cR(\a_{AC_1})$ is spanned by pure biseparable states.
For a bipartite entangled state $\a_{AC_1}$ whose range is spanned by pure biseparable states, if $\rank(\a)\leq 3$ one can project $\a$ to a two-qubit entangled state of rank at most three. So if $\rank(\a)\leq 3$ it suffices to take $\a$ as a two-qubit entangled state of rank at most three.


\begin{lemma}
\label{le:rank3totqbit}
Suppose $\rho$ is a bipartite entangled state of rank three, and its range is spanned by pure biseparable states. Then $\rho$ can be projected to a two-qubit entangled state of rank at most three.
\end{lemma}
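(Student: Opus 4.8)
The plan is to read ``projected'' as the application of local operators $M_A\ox M_B$, so that the resulting two-qubit state is $(M_A\ox M_B)\,\rho\,(M_A\ox M_B)^\dg$; this is the natural SLOCC-compatible reading and the one under which entanglement is meaningful, since local operators cannot create entanglement (so the statement is only non-vacuous when $\rho$ is entangled, which I assume throughout, consistent with the surrounding text). First I would trim the local dimensions: because $\rank\rho=3$, the supports of the reduced states $\rho_A$ and $\rho_B$ each have dimension at most $3$, so I may regard $\rho$ as a state on $\bbC^a\ox\bbC^b$ with $a,b\le3$. Since $\cR(\rho)$ is spanned by product vectors and has dimension $3$, I can extract from any such spanning set a product basis $\{\ket{e_1}\ket{f_1},\ket{e_2}\ket{f_2},\ket{e_3}\ket{f_3}\}$ of $\cR(\rho)$; a direct computation then gives $\mathrm{supp}(\rho_A)=\lin\{\ket{e_i}\}$ and $\mathrm{supp}(\rho_B)=\lin\{\ket{f_i}\}$, so $a$ and $b$ are the dimensions of $\lin\{\ket{e_i}\}$ and $\lin\{\ket{f_i}\}$. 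If $a=1$ then $\cR(\rho)\subseteq\ket{e}\ox\bbC^b$, forcing $\rho=\proj{e}\ox\sigma_B$ to be separable, and likewise for $b=1$; hence $a,b\in\{2,3\}$.

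The core device is to locate a vector $\ket{w}$ of Schmidt rank at most $2$ with $\bra{w}\rho^{\Gamma}\ket{w}<0$, where $\rho^{\Gamma}$ denotes the partial transpose on $B$. Given such a $\ket{w}$, I let $W_A$ contain the $A$-support of $\ket{w}$ and $W_B$ contain the complex conjugate of the $B$-support of $\ket{w}$ (each of dimension at most $2$), and take $P_A,P_B$ to be the orthogonal projectors onto $W_A,W_B$. Using the identity $\big((P_A\ox P_B)\rho(P_A\ox P_B)\big)^{\Gamma}=(P_A\ox P_B^{T})\,\rho^{\Gamma}\,(P_A\ox P_B^{T})$ together with $(P_A\ox P_B^{T})\ket{w}=\ket{w}$, the projected two-qubit state $\tau:=(P_A\ox P_B)\rho(P_A\ox P_B)$ obeys $\bra{w}\tau^{\Gamma}\ket{w}=\bra{w}\rho^{\Gamma}\ket{w}<0$, so $\tau$ is NPT and therefore entangled. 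Its range is the image of $\cR(\rho)$ under $P_A\ox P_B$, of dimension at most $3$, so $\rank\tau\le3$; thus producing the witness $\ket{w}$ is the whole game.

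When $\min\{a,b\}=2$ the witness is free: every vector of $\bbC^2\ox\bbC^b$ has Schmidt rank at most $2$, and an entangled state on $\bbC^2\ox\bbC^b$ with $b\le3$ is NPT by the Peres--Horodecki criterion, so any eigenvector of $\rho^{\Gamma}$ with negative eigenvalue serves. The case $a=b=3$ is the one I expect to be the main obstacle, since there witnesses can a priori have Schmidt rank $3$. Here $\{\ket{e_i}\}$ and $\{\ket{f_i}\}$ are each bases of $\bbC^3$, so the invertible local maps sending $\ket{e_i}\mapsto\ket{i}$ and $\ket{f_i}\mapsto\ket{i}$ carry $\cR(\rho)$ to the diagonal subspace $\lin\{\ket{ii}\}_{i=1}^{3}$; as Schmidt rank, entanglement, and the desired conclusion are all preserved under invertible local maps, I may assume $\rho=\sum_{i,j}c_{ij}\ketbra{ii}{jj}$ is maximally correlated with $c_{ji}=\bar c_{ij}$. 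If the matrix $(c_{ij})$ were diagonal then $\rho$ would be separable, so some $c_{kl}\ne0$ with $k\ne l$, and then $\ket{w}=\ket{kl}+\mu\ket{lk}$ for a suitable unit-modulus phase $\mu$ has Schmidt rank $2$ and satisfies $\bra{w}\rho^{\Gamma}\ket{w}=-2|c_{kl}|<0$. Feeding this $\ket{w}$ into the device above (here one simply projects both parties onto $\lin\{\ket{k},\ket{l}\}$) yields the required entangled two-qubit state of rank at most $3$, completing the argument.
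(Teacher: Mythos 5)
Your proposal is correct, but it takes a genuinely different route from the paper's. The paper argues via an explicit normal form: it writes $\cR(\rho)=\lin\{\ket{1,b_1},\ket{2,b_2},\ket{3,b_3}\}$, expands $\rho$ in a Cholesky-type decomposition with respect to this product basis, and splits into cases according to the linear (in)dependence of the $\ket{b_i}$, in each case exhibiting concrete local projectors (preceded, in the degenerate case $\ket{b_1}\propto\ket{b_2}\propto\ket{b_3}$, by an invertible filter $X$) that cut $\rho$ down to a two-qubit state; it then remarks that the lemma also follows from the known 1-distillability of rank-three bipartite entangled states. You instead prove the relevant instance of that distillability fact from scratch: after trimming the local dimensions to at most $3$ and extracting a product basis of the range, you reduce the hard $3\times 3$ case by invertible local maps to a maximally correlated state $\sum_{i,j}c_{ij}\ketbra{ii}{jj}$, exhibit an explicit Schmidt-rank-two vector $\ket{w}$ with $\bra{w}\rho^{\Gamma}\ket{w}=-2\abs{c_{kl}}<0$, and note that projecting onto the local supports of $\ket{w}$ preserves this negativity via $\big((P_A\ox P_B)\rho(P_A\ox P_B)\big)^{\Gamma}=(P_A\ox P_B^{T})\,\rho^{\Gamma}\,(P_A\ox P_B^{T})$, while the rank bound follows from $\cR(\tau)=(P_A\ox P_B)\cR(\rho)$. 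What your route buys is rigor exactly where the paper is thin: the paper's "one can further project it to a two-qubit entangled state" is asserted rather than proved, whereas your NPT certificate survives the projection by construction; moreover, your case split by local ranks covers configurations that the paper's normal form silently excludes (e.g.\ when the $A$-parts of the product basis are linearly dependent and so cannot be mapped to $\ket{1},\ket{2},\ket{3}$ by any invertible map --- in your treatment this simply falls into the $\min\{a,b\}=2$ case, where Peres--Horodecki finishes immediately). Your two framing choices are also sound: reading "projected" as conjugation by local operators including an invertible filter matches the paper's own usage (its third case conjugates by $X$ before projecting), and your standing assumption that $\rho$ is entangled is forced, since the statement is false for separable $\rho$ and the paper only ever applies the lemma to entangled states.
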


\begin{proof}
First, suppose $\rho$ is a bipartite state on the Hilbert space $\cH_A\ox\cH_B\cong\mathbb{C}^2\ox\mathbb{C}^2$. Since $\rho$ is a rank-three entangled state, the claim holds already. Then we consider the case when one of $\cH_A$ and $\cH_B$ is three dimensional. Up to the permuting of systems $A$ and $B$, we can assume that $\cH_A\cong \mathbb{C}^3$, and $\cR(\rho)=\lin\{\ket{1,b_1},\ket{2,b_2},\ket{3,b_3}\}$, where $\ket{b_i}\in \mathbb{C}^3$. It follows that 
\beq
\label{eq:expression-rho}
\bal
\rho&=(\ket{1,b_1}+x_2\ket{2,b_2}+x_3\ket{3,b_3})(\bra{1,b_1}+x_2^*\bra{2,b_2}+x_3^*\bra{3,b_3})\\
    &+(y_2\ket{2,b_2}+y_3\ket{3,b_3})(y_2^*\bra{2,b_2}+y_3^*\bra{3,b_3})\\
    &+\abs{z_3}^2\proj{3,b_3}.
\eal
\eeq
If $\ket{b_1}$ and $\ket{b_2}$ are linearly independent, one can project $\rho$ to a qubit-qutrit state using the projector $(\proj{1}+\proj{2})\ox I_3$. One can further project it to a two-qubit entangled state similarly.  If $\ket{b_1}$ and $\ket{b_3}$ are linearly independent, one can project $\rho$ to a qubit-qutrit state using the projector $(\proj{1}+\proj{3})\ox I_3$, and further project it to a two-qubit entangled state.

Otherwise, it implies $x_2\ket{b_2}\propto\ket{b_1}$ and $x_3\ket{b_3}\propto\ket{b_1}$. Assume both $x_2$ and $x_3$ are nonzero without loss of generality. So there exist $k_2,k_3$ such that $\ket{b_1}=k_2x_2\ket{b_2}=k_3x_3\ket{b_3}$.  One can find an invertible matrix $X$ such that $X\ket{1}=\ket{1}-\frac{1}{k_2}\ket{2}-\frac{1}{k_3}\ket{3}$, $X\ket{2}=\ket{2}$, and $X\ket{3}=\ket{3}$. Then we have 
\beq
\label{eq:expression-rho-1}
\bal
X\rho X^\dg&=\proj{1,b_1}+(y_2\ket{2,b_2}+y_3\ket{3,b_3})(y_2^*\bra{2,b_2}+y_3^*\bra{3,b_3})+\abs{z_3}^2\proj{3,b_3}.
\eal
\eeq
So one can project $X\rho X^\dg$ to a qubit-qutrit state using the projector $(\proj{2}+\proj{3})\ox I_3$, and further project it to a two-qubit entangled state.
\end{proof}
Lemma \ref{le:rank3totqbit} indeed follows from the fact that bipartite entangled states of rank three are 1-distillable \cite{rank3dislc2008}, i.e., there exist rank-two projectors $P$ and $Q$ such that $(P \otimes Q)^\dagger \rho (P \otimes Q)$ is a two-qubit entangled state. Further by SLOCC equivalence defined by Definition \ref{df:equivalence} we show the SLOCC equivalent spaces of $\cR(\rho)$ as follows, where $\rho$ is a two-qubit state.

\begin{lemma}
\label{le:rangeslocc}
Suppose $\rho$ is a two-qubit entangled state whose range is spanned by pure biseparable states. Then

(i) $\cR(\rho)=\lin\{\ket{00},\ket{11}\}$ under SLOCC equivalence if $\rho$ has rank two.

(ii) $\cR(\rho)$ is either $\lin\{\ket{00},\ket{11},(\ket{0}+\ket{1})(\ket{0}+\ket{1})\}$, or $\lin\{\ket{00},\ket{01},\ket{10}\}$ under SLOCC equivalence if $\rho$ has rank three. 
\end{lemma}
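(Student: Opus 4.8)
The plan is to handle the two ranks by different devices: for rank two I would construct the local operation by hand, whereas for rank three I would pass to the one-dimensional orthogonal complement, where the whole classification collapses onto the Schmidt-rank dichotomy of a single two-qubit vector. Throughout I use that, in the two-qubit setting, "pure biseparable" means product, and that by Definition \ref{df:equivalence}(iv) two subspaces are SLOCC equivalent exactly when one is carried onto the other by a product invertible $Y_1\ox Y_2$.

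For (i), I would begin with a basis of $\cR(\rho)$ consisting of product states, say $\ket{a_1,b_1},\ket{a_2,b_2}$, which exists because $\cR(\rho)$ is two-dimensional and spanned by product states. The crucial claim is that $\{\ket{a_1},\ket{a_2}\}$ and $\{\ket{b_1},\ket{b_2}\}$ are each linearly independent: if, say, $\ket{a_1}\propto\ket{a_2}$, then $\cR(\rho)=\ket{a_1}\ox\lin\{\ket{b_1},\ket{b_2}\}$ would consist entirely of product vectors, i.e. a "product slice," contradicting that $\rho$ is entangled. Granting independence, I choose invertible single-qubit maps $Y_1,Y_2$ with $Y_1\ket{a_i}=\ket{i-1}$ and $Y_2\ket{b_i}=\ket{i-1}$; then $Y_1\ox Y_2$ sends the chosen basis to $\ket{00},\ket{11}$, so $\cR(\rho)\sim_s\lin\{\ket{00},\ket{11}\}$.

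For (ii), the key observation is that $\dim\cR(\rho)=3$ forces $\cR(\rho)^\perp=\lin\{\ket{\chi}\}$ to be one-dimensional, and that a product invertible $Y=Y_1\ox Y_2$ applied to $\cR(\rho)$ corresponds to $Y^{-\dg}=Y_1^{-\dg}\ox Y_2^{-\dg}$ applied to $\cR(\rho)^\perp$, since $(Y\,\cR(\rho))^\perp=Y^{-\dg}\,\cR(\rho)^\perp$ and $Y\mapsto Y^{-\dg}$ preserves the class of product invertible operators. Hence the SLOCC class of $\cR(\rho)$ is governed entirely by the SLOCC class of the single vector $\ket{\chi}$. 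A two-qubit pure state has Schmidt rank one or two, giving exactly two classes, $\ket{\chi}\sim_s\ket{00}$ or $\ket{\chi}\sim_s\ket{01}-\ket{10}$. I would then just verify the two canonical complements: $(\ket{00})^\perp=\lin\{\ket{01},\ket{10},\ket{11}\}$, which under the bit-flip $X\ox X$ becomes $\lin\{\ket{00},\ket{01},\ket{10}\}$; and $(\ket{01}-\ket{10})^\perp=\lin\{\ket{00},\ket{11},\ket{01}+\ket{10}\}=\lin\{\ket{00},\ket{11},(\ket0+\ket1)(\ket0+\ket1)\}$. These are exactly the two spaces in the statement; this simultaneously shows every rank-three range is automatically spanned by product states, so the hypothesis is vacuous in this case.

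The main obstacle is the rank-three bookkeeping that transfers SLOCC from $\cR(\rho)$ to its complement together with the verification that the product-versus-entangled type of $\ket{\chi}$ is a genuine SLOCC invariant separating precisely the two listed subspaces. In rank two the subtlety is more conceptual than computational: one must invoke that $\rho$ is entangled (as holds for the states to which the lemma is applied) to exclude the product-slice ranges $\ket{a}\ox\bbC^2$ and $\bbC^2\ox\ket{b}$, which are spanned by product states yet are not SLOCC equivalent to $\lin\{\ket{00},\ket{11}\}$.
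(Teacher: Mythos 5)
Your proposal is correct, and part (ii) takes a genuinely different route from the paper's. For (i) you follow essentially the paper's construction (a product basis of $\cR(\rho)$, double linear independence, then an explicit $Y_1\ox Y_2$), but you justify the independence step more carefully than the paper does: the paper asserts that rank two alone forces both $\{\ket{a_1},\ket{a_2}\}$ and $\{\ket{b_1},\ket{b_2}\}$ to be independent, whereas rank two only rules out both pairs being \emph{simultaneously} dependent; your exclusion of the product-slice ranges $\ket{a}\ox\bbC^2$ and $\bbC^2\ox\ket{b}$ via entanglement of $\rho$ is exactly the missing ingredient, and the entanglement hypothesis is indeed available where the lemma is used (Theorem \ref{le:2r2oxr2} applies it to entangled $\a,\b$), so your reading matches the intended scope. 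For (ii) the paper argues forward, by a case analysis on the pairwise (in)dependence of the $\ket{a_i}$'s and $\ket{b_i}$'s in a product basis of $\cR(\rho)$, building $X_i,Y_i$ case by case (with some implicit relabeling needed to cover, e.g., the $\ket{a_i}$'s independent but $\ket{b_1}\propto\ket{b_2}$); you instead dualize: since $(Y\,\cR(\rho))^\perp=Y^{-\dg}\,\cR(\rho)^\perp$ and $Y\mapsto Y^{-\dg}$ preserves product invertibility, the SLOCC class of the three-dimensional range is governed by the SLOCC class of the single vector $\ket{\chi}$ spanning its complement, and the two-qubit Schmidt-rank dichotomy ($\ket{\chi}\sim_s\ket{00}$ or $\ket{\chi}\sim_s\ket{01}-\ket{10}$) then yields exactly the two canonical spaces; your verifications $(\ket{01}-\ket{10})^\perp=\lin\{\ket{00},\ket{11},\ket{01}+\ket{10}\}=\lin\{\ket{00},\ket{11},(\ket{0}+\ket{1})(\ket{0}+\ket{1})\}$ and $(X\ox X)\bigl((\ket{00})^\perp\bigr)=\lin\{\ket{00},\ket{01},\ket{10}\}$ check out. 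The dualization buys something the paper's proof does not: it shows the rank-three classification holds for \emph{every} three-dimensional subspace of $\bbC^2\ox\bbC^2$, so the spanned-by-product-states hypothesis is automatic there, and it eliminates the case bookkeeping entirely; the paper's constructive route, in exchange, exhibits the local invertible operations explicitly rather than through the complement.
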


\begin{proof}
(i) Suppose $\cR(\rho)=\lin\{\ket{a_1,b_1},\ket{a_2,b_2}\}$. Since $\rho$ is rank two, it implies that $\ket{a_1}$ and $\ket{a_2}$ are linearly independent, and $\ket{b_1}$ and $\ket{b_2}$ are linearly independent. So we can find two invertible matrices $X$ and $Y$ such that $X\ket{a_1}=\ket{0},\quad X\ket{a_2}=\ket{1}$, and $Y\ket{b_1}=\ket{0},\quad Y\ket{b_2}=\ket{0}$. By Definition \ref{df:equivalence} $\cR(\rho)$ is SLOCC equivalent to $\lin\{\ket{00},\ket{11}\}$.

(ii) Suppose $\cR(\rho)=\lin\{\ket{a_1,b_1},\ket{a_2,b_2},\ket{a_3,b_3}\}$. First if $\ket{a_1},\ket{a_2},\ket{a_3}$ are pairwisely linearly independent, and $\ket{b_1},\ket{b_2},\ket{b_3}$ are pairwisely linearly independent, we can find an invertible $X_1$ such that $X_1\ket{a_1}\propto\ket{0}$, $X_1\ket{a_2}\propto\ket{1}$, and $X_1\ket{a_3}=\ket{0}+\ket{1}$. One can find another invertible $Y_1$ such that $Y_1\ket{b_1}\propto\ket{0}$, $Y_1\ket{b_2}\propto\ket{1}$, and $Y_1\ket{b_3}=\ket{0}+\ket{1}$. By Definition \ref{df:equivalence} $\cR(\rho)$ is SLOCC equivalent to $\lin\{\ket{00},\ket{11},(\ket{0}+\ket{1})(\ket{0}+\ket{1})\}$ in this case. Second if $\ket{a_1},\ket{a_2}$ are linearly dependent, and $\ket{b_1},\ket{b_2}$ are linearly independent, it implies $\ket{a_3}$ are linearly independent with $\ket{a_1}$ and $\ket{a_2}$. One can similarly find $X_2,Y_2$ such that $X_2\ket{a_1}=\ket{0}$, $X_2\ket{a_2}=\ket{0}$, $X_2\ket{a_3}=\ket{1}$, and $Y_2\ket{b_1}=\ket{x}$, $Y_2\ket{b_2}=\ket{y}$, $Y_2\ket{b_3}=\ket{0}$, where $\ket{x}$ and $\ket{y}$ are linearly independent. So $\lin\{\ket{x},\ket{y}\}=\lin\{\ket{0},\ket{1}\}$. By Definition \ref{df:equivalence} $\cR(\rho)$ is SLOCC equivalent to $\lin\{\ket{00},\ket{01},\ket{10}\}$ in this case.
\end{proof}

By the above results we investigate the case when the two bipartite states $\a$ and $\b$ both have rank two.
\begin{theorem}
\label{le:2r2oxr2}
Conjecture \ref{cj:aotimesb} (ii) holds if $\a$ and $\b$ both have rank two.
\end{theorem}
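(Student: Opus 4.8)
The plan is to dispose of most configurations with the results already in hand and then isolate a single resistant case that must be attacked by a direct matrix-element argument. First I would invoke Theorem \ref{le:conjmulti} (ii): if either $\cR(\a_{AC_1})$ or $\cR(\b_{BC_2})$ is not spanned by pure biseparable states, the conclusion is immediate. So the entire content lies in the complementary case, where both $\cR(\a)$ and $\cR(\b)$ are spanned by product states. This is exactly the hard part, because there the range of $\a\ox_{K_c}\b$ turns out to be spanned by biseparable vectors as well, so the sufficient condition of Lemma \ref{le:rdmge} (iii) no longer applies and biseparability must be excluded by hand.

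Second, I would normalize the two states. Since $\a$ is rank two, entangled, and its range is spanned by two product vectors, the two local factors on each side are linearly independent, so $\a$ is supported on a $2\times 2$ subspace and may be treated as a two-qubit state; likewise for $\b$. By Lemma \ref{le:rangeslocc} (i) there are product invertible operations carrying $\cR(\a)$ to $\lin\{\ket{00},\ket{11}\}_{AC_1}$ and $\cR(\b)$ to $\lin\{\ket{00},\ket{11}\}_{BC_2}$. The observation that legitimizes this reduction is that an SLOCC operation $Y_A\ox Y_{C_1}$ on $\a$ together with $Z_B\ox Z_{C_2}$ on $\b$ induces the product operation $Y_A\ox Z_B\ox(Y_{C_1}\ox Z_{C_2})$ on the three parties $A,B,(C_1C_2)$ of $\a\ox_{K_c}\b$; since genuine entanglement is preserved under SLOCC equivalence, it suffices to prove the claim for the two canonical states. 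I would also record that for a rank-two state supported on $\lin\{\ket{00},\ket{11}\}$ the two-qubit PPT criterion yields entangled $\Leftrightarrow$ nonzero off-diagonal coherence, so after normalization the coherences $\a_{01}$ and $\b_{01}$ are both nonzero.

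Third comes the range and the structural dichotomy. Writing $\a=\sum_{i,i'}\a_{ii'}\ket{ii}_{AC_1}\bra{i'i'}$ and $\b$ analogously, the matrix $\a\ox_{K_c}\b$ acts on the four-dimensional support spanned by $e_{ij}:=\ket{i}_A\ket{j}_B\ket{ij}_C$ with $C=(C_1C_2)$, and its entries are $\bra{e_{ij}}(\a\ox_{K_c}\b)\ket{e_{i'j'}}=\a_{ii'}\b_{jj'}$. The crucial feature is the locking in the range: the $C_1$-component equals the $A$-value and the $C_2$-component equals the $B$-value. I would then classify the biseparable vectors inside this range. A vector product across $A|BC$ must carry a definite computational $A$-value, since the $A=0$ and $A=1$ parts live in $C_1$-orthogonal subspaces and cannot be proportional; symmetrically a vector product across $B|AC$ must carry a definite $B$-value; and a vector product across $C|AB$ must be proportional to a single $e_{ij}$, because its Schmidt form in that cut is already diagonal. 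Consequently, in any biseparable decomposition $\a\ox_{K_c}\b=\rho_1+\rho_2+\rho_3$ grouped by cut, $\rho_1$ is block diagonal in the $A$-index, $\rho_2$ is block diagonal in the $B$-index, and $\rho_3$ is diagonal in the $\{e_{ij}\}$ basis.

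Finally I would read the contradiction off one off-diagonal entry. The element $\bra{e_{00}}(\a\ox_{K_c}\b)\ket{e_{11}}=\a_{01}\b_{01}$ is nonzero, yet $e_{00}$ and $e_{11}$ differ in the $A$-index, in the $B$-index, and as basis vectors of $C$; hence $\rho_1$, $\rho_2$ and $\rho_3$ each contribute zero there, forcing the entry to vanish. This contradiction shows $\a\ox_{K_c}\b$ is not biseparable, i.e.\ it is genuine entangled. The main obstacle, and the point where the argument must be genuinely new rather than an appeal to Theorem \ref{le:conjmulti}, is exactly this last case: one has to exploit the locking structure of the range to pin down the block/diagonal form of every biseparable piece simultaneously and then defeat all three of them with a single surviving coherence.
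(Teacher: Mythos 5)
Your proof is correct, and it reaches the paper's conclusion by a noticeably different final argument. You share the same two reductions as the paper — Theorem \ref{le:conjmulti} (ii) to dispose of ranges not spanned by product vectors, and Lemma \ref{le:rangeslocc} (i) plus invertibility of $Y_A\ox Z_B\ox(Y_{C_1}\ox Z_{C_2})$ to normalize both ranges to $\lin\{\ket{00},\ket{11}\}$ — but from there the routes diverge. The paper normalizes further to the explicit two-parameter family $\a=(\ket{00}+\ket{11})(\bra{00}+\bra{11})+x_1\proj{00}$, $\b=(\ket{00}+\ket{11})(\bra{00}+\bra{11})+x_2\proj{00}$, then compresses with the local projector $P=I_2\ox I_2\ox(\proj{0}+\proj{3})$ to a rank-two state $\sigma$ with $\cR(\sigma)=\lin\{\ket{000},\ket{113}\}$, and kills biseparability of $\sigma$ by noting the only product vectors in that range are the two basis vectors while $\sigma$ retains the coherence $\ketbra{000}{113}$. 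You instead keep the general coherences $\a_{ii'},\b_{jj'}$, classify \emph{all} product vectors in the locked four-dimensional support $\lin\{e_{ij}\}$ cut by cut (definite $A$-index for $A|BC$, definite $B$-index for $B|AC$, a single $e_{ij}$ for $C|AB$ via the already-diagonal Schmidt form), and defeat all three biseparable blocks at once with the single entry $\bra{e_{00}}\rho\ket{e_{11}}=\a_{01}\b_{01}\neq 0$. Both arguments are, at bottom, the same coherence-versus-block-diagonality contradiction, and both rely on the standard fact (also implicit in the paper's Lemma \ref{le:rdmge} (iii)) that every vector in an ensemble decomposition of $\rho$ lies in $\cR(\rho)$ — you use it silently when grouping the decomposition into $\rho_1+\rho_2+\rho_3$, so you should state it. What your version buys is that it needs no explicit normal form beyond the range statement and no projection step, it exhibits the full block structure of any biseparable decomposition rather than just its compression, and the locking argument is directly reusable for the higher-rank and multipartite cases of Conjecture \ref{cj:aotimesb}; what the paper's projection buys is brevity, since on the two-dimensional compressed range the classification of product vectors is immediate.
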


\begin{proof}
Using Theorem \ref{le:conjmulti} (ii), we may assume that the ranges of $\a,\b$ are both spanned by pure biseparable states.
From Lemma \ref{le:rangeslocc} (i) we may further assume $\a=(\ket{00}+\ket{11})(\bra{00}+\bra{11})+x_1\proj{00}$ and $\b=(\ket{00}+\ket{11})(\bra{00}+\bra{11})+x_2\proj{00}$, where $x_1,x_2>0$. Then we have 
\beq
\label{eq:ijx1x2}
\bal
\r&=\a\ox_{K_c}\b\\
  &=
(\ket{000}+\ket{011}
+\ket{102}+\ket{113})
(\bra{000}+\bra{011}
+\bra{102}+\bra{113})\\
&+x_2(\ket{000}+\ket{102})
(\bra{000}+\bra{102})
+x_1(\ket{000}+\ket{011})
(\bra{000}+\bra{011})
+x_1x_2\proj{000}.
\eal
\eeq
Let $P=I_2\ox I_2 \ox (\proj{0}+\proj{3})$, and $\sigma=P\rho P^\dg$. It implies that if $\rho$ is biseparable, so is $\sigma$. We have
\begin{eqnarray}
\label{eq:ij-1}
\sigma=&
(\ket{000}+\ket{113})(\bra{000}+\bra{113})+(x_1+x_2+x_1x_2)\proj{000}.
\end{eqnarray}
One can show $\sigma$ is a tripartite genuine entangled state. From \eqref{eq:ij-1} we have the range of $\sigma$ is spanned by $\ket{000}$ and $\ket{113}$ which are the exact two pure biseparable states in $\cR(\sigma)$. However, $\sigma$ cannot be the convex linear combination of $\proj{000}$ and $\proj{113}$, so $\sigma$ is genuine entangled. Therefore, $\rho$ is genuine entangled.
\end{proof}


If $\a$ has full rank, $\cR(\a)$ is necessarily spanned by pure biseparable states. In the following we consider both $\a$ and $\b$ are full-rank states. We show that to prove Conjecture \ref{cj:aotimesb} holds for all $\a$ and $\b$ is equivalent to prove Conjecture \ref{cj:aotimesb} holds for all $\g$ and $\d$ of full rank.
\begin{lemma}
\label{le:conjequiv}
Suppose $\a,\b$ are two entangled states in Conjecture \ref{cj:aotimesb} (i). Then

(i) Conjecture \ref{cj:aotimesb} (i) holds if $(\a+\g)\otimes_{K_c}\b$ is a GME state for an arbitrary separable state $\g_{AC_{1,1}\cdots C_{1,n}}$;

(ii) Conjecture \ref{cj:aotimesb} (i) holds if and only if $\g_{AC_{1,1}\cdots C_{1,n}}\otimes_{K_c}\d_{BC_{2,1}\cdots C_{2,n}}$ is a GME state for all $\g,\d$ of full rank, where $\g$ is a GME state and $\d$ is a bipartite entangled state of systems $B$ and $(C_{2,1}\cdots C_{2,n})$. 
\end{lemma}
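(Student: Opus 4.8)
The plan is to reduce everything to two soft facts about the Kronecker-type product $\ox_{K_c}$. First, $\ox_{K_c}$ is the ordinary tensor product of operators composed with a fixed regrouping of tensor factors, so it is \emph{bilinear}; in particular $(\a+\g)\ox_{K_c}\b=\a\ox_{K_c}\b+\g\ox_{K_c}\b$, and likewise in the second argument. Second, a factorized party forces biseparability: if $\g$ on $\cH_{AC_{1,1}\cdots C_{1,n}}$ is (fully) separable, hence separable across $A|C_{1,1}\cdots C_{1,n}$, then every term of $\g\ox_{K_c}\b$ keeps $A$ factorized, so $\g\ox_{K_c}\b$ is separable in the cut $A|BC_1\cdots C_n$ and therefore biseparable; symmetrically, if $\tau$ on $\cH_{BC_{2,1}\cdots C_{2,n}}$ is separable across $B|C_{2,1}\cdots C_{2,n}$, then $\a\ox_{K_c}\tau$ is separable in $B|AC_1\cdots C_n$ and biseparable. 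Combined with the trivial remark that a sum of biseparable states is biseparable, these give both parts.

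For part (i) I would argue by contradiction. Suppose $\a\ox_{K_c}\b$ is biseparable and let $\g$ be any separable state. By bilinearity $(\a+\g)\ox_{K_c}\b=\a\ox_{K_c}\b+\g\ox_{K_c}\b$; the first summand is biseparable by assumption and the second by the observation above, so the sum is biseparable. This contradicts the hypothesis that $(\a+\g)\ox_{K_c}\b$ is GME, so $\a\ox_{K_c}\b$ must be genuine entangled, which is exactly Conjecture \ref{cj:aotimesb} (i) for this pair.

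For part (ii), the ``only if'' direction is a one-line specialization: a full-rank GME $\g$ and a full-rank bipartite-entangled $\d$ are admissible inputs for Conjecture \ref{cj:aotimesb} (i), so if the conjecture holds then $\g\ox_{K_c}\d$ is GME. For the ``if'' direction I would pad to full rank. Given arbitrary $\a$ (GME) and $\b$ (bipartite entangled), set $\a'=\a+\e I$ and $\b'=\b+\e' I$ with $\e,\e'>0$ small and $I$ the identity on the respective spaces. Since $\e I$ and $\e' I$ are multiples of the (fully separable) maximally mixed state, both $\a'$ and $\b'$ are full rank; and because the biseparable states and the bipartite separable states form closed sets, genuine entanglement and bipartite entanglement are stable under small perturbations, so for $\e,\e'$ small enough $\a'$ is full-rank GME and $\b'$ is full-rank bipartite entangled. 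The full-rank hypothesis then gives that $\a'\ox_{K_c}\b'$ is GME. Expanding,
\beq
\a'\ox_{K_c}\b'=\a\ox_{K_c}\b+\e\, I\ox_{K_c}\b+\e'\,\a\ox_{K_c}I+\e\e'\, I\ox_{K_c}I,
\eeq
and by the two observations each of the last three terms is biseparable. If $\a\ox_{K_c}\b$ were biseparable the entire right-hand side would be biseparable, contradicting that $\a'\ox_{K_c}\b'$ is GME; hence $\a\ox_{K_c}\b$ is GME, i.e.\ Conjecture \ref{cj:aotimesb} (i) holds.

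The bilinearity and the biseparability bookkeeping are routine once one checks that the $A$- or $B$-register really does remain factorized in each error term. The step I expect to require the most care is the full-rank reduction in (ii): I must argue that a sufficiently small separable perturbation preserves genuine multipartite entanglement of $\a$ and bipartite entanglement of $\b$. This rests on the compactness/closedness of the biseparable (respectively separable) set in finite dimension, and on choosing the perturbation to be simultaneously full rank and fully separable, so that all three error terms in the expansion are indeed biseparable; I would state this stability property explicitly rather than leave it implicit.
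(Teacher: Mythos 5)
Your proof is correct and follows essentially the same route as the paper's: part (i) is the identical convexity argument (separable $\g$ makes $\g\ox_{K_c}\b$ biseparable, and a sum of biseparable states is biseparable), and part (ii) pads with $\e I$ exactly as the paper does with $xI$, merely expanding all four bilinear terms at once instead of invoking part (i) twice in succession. If anything you are slightly more careful than the paper, which only asks that $\a+xI$ and $\b+xI$ remain \emph{entangled}, whereas, as you note via closedness of the biseparable and separable sets, one needs $\a+\e I$ to remain GME and $\b+\e' I$ to remain bipartite entangled for the full-rank hypothesis to apply.
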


\begin{proof}
(i) We prove the assertion by contradiction. Suppose $\a\otimes_{K_c}\b$ is a biseparable state. Since $\g$ is separable, $\g\otimes_{K_c}\b$ is also a biseparable state. Since the set of biseparable states is convex, $(\a+\g)\otimes_{K_c}\b$ is a biseparable state. It contradicts with the condition. So (i) holds.

(ii) The "only if" part is trivial. We next prove the "if" part. We can choose small enough $x>0$ such that $\a+xI$ and $\b+xI$ are still entangled. They evidently have full rank. The assumption shows that $(\a+xI)\otimes_{K_c}(\b+xI)$ is a GME state. Then assertion (i) shows that $\a\otimes(\b+xI)$ is a GME state. Using  assertion (i) again, we have 
$\a\otimes_{K_c}\b$ is a GME state.

This completes the proof.
\end{proof}

The converse of Lemma \ref{le:conjequiv} (i) is wrong. That is, if $\a$ is a bipartite entangled state such that $\a\otimes_{K_c}\b$ is a tripartite genuine entangled state, then the tripartite state $(\a+\g)\otimes_{K_c}\b$ may be biseparable. For example, we can choose $\g=xI$ with large enough $x>0$ such that $\a+\g$ is separable. Then $(\a+\g)\otimes_{K_c}\b$ is biseparable.


It is known that the Werner states $\rho_w(d,p)$ in Eq. \eqref{eq:werner} are of full rank if and only if $p\neq\pm 1$, and it follows from the fact above Definition \ref{def:werner} that each NPT bipartite state can be converted to an NPT Werner state using LOCC. We next consider Conjecture \ref{cj:aotimesb} (ii) for bipartite NPT states $\a_{AC_1}$ and $\b_{BC_2}$. So $\a$ is LOCC equivalent to $\r_w(d_1,p_1)_{AC_1}$, and $\b$ is LOCC equivalent to $\r_w(d_2,p_2)_{BC_2}$. Lemma \ref{le:conjequiv} (i) can be used to further reduce the parameters of $\r_w(d_1,p_1)_{AC_1}\ox_{K_c}\r_w(d_2,p_2)_{BC_2}$.

\begin{lemma}
\label{le:npt}
Suppose $\a_{AC_1}\in\cH_1\ox\cH_1 \cong \mathbb{C}^{d_1}\ox\mathbb{C}^{d_1}$ and $\b_{BC_2}\in\cH_2\ox\cH_2 \cong \mathbb{C}^{d_2}\ox\mathbb{C}^{d_2}$ are two NPT states. Then 

(i) $\a\otimes_{K_c}\b$ is a tripartite genuine entangled state for all $\a,\b$ if and only if there is a neighborhood $[h, 0)$, and for all $\e\in [h, 0)$, $\r_w(d,\e-{1\over d})_{AC_1}\otimes_{K_c}\r_w(d,\e-{1\over d})_{BC_2}$ is a tripartite genuine entangled state, where $d=\max\{d_1,d_2\}$.

(ii) Let $p_1,p_2\in[-1,-1/2)$.  Then $\r_w(d_1,p_1)_{AC_1}\otimes_{K_c}\r_w(d_2,p_2)_{BC_2}$ is a tripartite genuine entangled state for any $d_1,d_2\ge2$ if and only if there is a neighborhood $[h, 0)$, and for all $\e\in [h, 0)$, $\r_w(2,\e-1/2)_{AC_1}\otimes_{K_c}\r_w(2,\e-1/2)_{BC_2}$ is a tripartite genuine entangled state.
\end{lemma}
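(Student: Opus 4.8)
The plan is to prove both parts by the same two-step reduction, treating the ``only if'' directions as trivial: in each part the boundary Werner state named on the right is itself a particular NPT state, so the forward implication is immediate. The substance is the ``if'' direction, for which I would (1) use local operations to bring the bipartite factors into Werner form, and (2) use the decomposition of a Werner state into a more-entangled Werner state plus a separable one, together with Lemma \ref{le:conjequiv} (i), to push the Werner parameters to the stated boundary values.

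For part (i), I would first embed both $\a_{AC_1}$ and $\b_{BC_2}$ into spaces of the common dimension $d=\max\{d_1,d_2\}$; this padding preserves NPT-ness and does not alter the biseparability type of $\a\ox_{K_c}\b$. By the fact recalled above Definition \ref{def:werner}, each NPT factor is driven by an LOCC channel to an NPT Werner state $\r_w(d,p_i)$ with $p_i\in[-1,-1/d)$. The key point is that an LOCC on the pair $(A,C_1)$ lifts to a legitimate local operation on the three parties $A,\,B,\,(C_1C_2)$ of the Kronecker-glued product: its $A$-part acts on party $A$, while its $C_1$-part acts only inside the $C_1$-factor of the composite party $(C_1C_2)$, and the same holds for $\b$. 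Since biseparability with respect to the tripartition $A|B|(C_1C_2)$ is preserved under such operations, it suffices to show $\r_w(d,p_1)\ox_{K_c}\r_w(d,p_2)$ is GME for all $p_1,p_2\in[-1,-1/d)$. Writing Werner states unnormalized as $M(p)=I_d\ox I_d+p\sum_{i,j}\ket{i,j}\bra{j,i}$, a short computation gives $M(p_2)=c\,M(p_1)+(1-c)\,M(-1/d)$ with $c=\frac{1+dp_2}{1+dp_1}\in(0,1)$ whenever $p_1<p_2<-1/d$, where $M(-1/d)$ is separable. Taking $p_2=\e-1/d$ exhibits the boundary state $\r_w(d,\e-1/d)$ as a positive multiple of the more-entangled $\r_w(d,p_1)$ plus a separable state, so Lemma \ref{le:conjequiv} (i) yields that GME of $\r_w(d,\e-1/d)\ox_{K_c}\b$ implies GME of $\r_w(d,p_1)\ox_{K_c}\b$; applying the same step to the second factor (using that the tripartite $\ox_{K_c}$ is symmetric under exchanging $\a$ and $\b$) reduces $p_2$ as well. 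Hence GME of $\r_w(d,\e-1/d)\ox_{K_c}\r_w(d,\e-1/d)$ forces GME of every $\r_w(d,p_1)\ox_{K_c}\r_w(d,p_2)$, which proves part (i).

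For part (ii) I would prepend one dimension-collapsing step. When $p_1,p_2\in[-1,-1/2)$ the factors are one-copy distillable by Lemma \ref{le:distillwerner} (iii), so there are rank-two projectors sending each to a two-qubit NPT state; as above, applying these projectors on $(A,C_1)$ and $(B,C_2)$ is a biseparability-preserving local operation on $A,B,(C_1C_2)$, so it suffices to treat two-qubit factors. These reduce by LOCC to $\r_w(2,q_i)$ with $q_i\in[-1,-1/2)$ (note $-1/2=-1/d$ at $d=2$), and the boundary reduction of part (i), now in dimension two, pushes $q_1,q_2$ to $\e-1/2$; this gives part (ii).

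The hard part will be the bookkeeping in the first step of each part: one must check carefully that every LOCC or projection performed on a bipartite factor $(A,C_1)$ or $(B,C_2)$ genuinely descends to a local, biseparability-preserving operation for the tripartition $A|B|(C_1C_2)$, since there $C_1$ and $C_2$ are only sub-factors of a single party rather than parties in their own right. Once this is secured, the Werner identity $M(p_2)=c\,M(p_1)+(1-c)M(-1/d)$ and the invocation of Lemma \ref{le:conjequiv} (i) are routine.
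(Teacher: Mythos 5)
Your proposal is correct and takes essentially the same route as the paper's proof: both parts reduce the factors to NPT Werner states via LOCC/separable operations lifted locally to the tripartition $A|B|(C_1C_2)$, then invoke Lemma \ref{le:conjequiv} (i) with a decomposition of the boundary Werner state as a more-entangled Werner state plus a separable state to push the parameters to $\e-1/d$ (and, for part (ii), first project to two-qubit NPT states using one-copy distillability from Lemma \ref{le:distillwerner} (iii)). The only cosmetic difference is that your separable addend is $(1-c)M(-1/d)$ whereas the paper's Eq. \eqref{eq:npt1/2-1} uses $x_{p_1}\,I_d\ox I_d$, which serves the identical purpose.
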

\begin{proof}
(i) Let $d=d_1\geq d_2$. It follows from Lemma \ref{le:conjequiv} (ii) that $\a\otimes_{K_c}\b$ is genuine entangled for all $\a,\b$ if and only if $\g\otimes_{K_c}\d$ is genuine entangled for all bipartite NPT states $\g,\d\in\cB(\mathbb{C}^{d}\ox\mathbb{C}^{d})$.
It follows from Lemma \ref{le:distillwerner}
that $\r_w(p,d)$ is NPT if and only if $p\in[-1,-1/d)$. So the "only if" part holds. We prove the "if" part. We first prove such a claim that $\g\otimes_{K_c}\d$ is genuine entangled for all $\g,\d$ if $\r_w(d,p_1)_{AC_1}\otimes_{K_c}\r_w(d,p_2)_{BC_2}$ is genuine entangled for all $p_j\in [-1,-1/d)$.
Suppose there exist $\g,\d$ such that $\g_{AC_1}\otimes_{K_c}\d_{BC_2}$ is a tripartite biseparable state. We can find a separable operation $\L$ defined by Eq. \eqref{eq:sepop} on the space $\cH_A\otimes\cH_{C_1}$ such that $\L(\g)=\r_w(d,p_1)$ for some $p_1\in[-1,-1/d)$. By the same reason we can find a separable operation $\L'$ on the space $\cH_{B}\otimes\cH_{C_2}$ such that $\L'(\d)=\r_w(d,p_2)$ for some $p_2\in[-1,-1/d)$. Hence
\begin{eqnarray}
(\L\otimes\L')(\g\otimes_{K_c}\d)	
=\r_w(d,p_1)\otimes_{K_c}\r_w(d,p_2)
\end{eqnarray}
is still a tripartite biseparable state, which contradicts with $\r_w(d,p_1)\otimes_{K_c}\r_w(d,p_2)$ is genuine entangled for any $p_j\in [-1,-1/d)$. Second we will show $\r_w(d,p_1)\otimes_{K_c}\r_w(d,p_2)$ is genuine entangled for any $p_j\in [-1,-1/d)$ if there is a neighborhood $[h, 0)$, and for all $\e\in [h, 0)$, $\r_w(d,\e-{1\over d})\otimes_{K_c}\r_w(d,\e-{1\over d})$ is genuine entangled. For any $p_1\in[-1,-1/d)$, there exist $x_{p_1}\geq 0$ and $\e< 0$, such that 
\beq
\label{eq:npt1/2-1}
\bal
&I_d\ox I_d+p_1\sum_{i,j=0}^{d-1} \ketbra{i,j}{j,i}+x_{p_1} I_d\ox I_d\\
&=(1+x_{p_1})\big(I_d\ox I_d+\frac{p_1}{1+x_{p_1}}\sum_{i,j=0}^{d-1} \ketbra{i,j}{j,i}\big)\\
                                                             &=(1+x_{p_1})\big(I_d\ox I_d+(\e-\frac{1}{d})\sum_{i,j=0}^{d-1} \ketbra{i,j}{j,i}\big).
\eal
\eeq
It follows from Lemma \ref{le:conjequiv} (i) that $\forall p_1\in[-1,-1/d)$, $\r_w(d,p_1)\otimes_{K_c}\r_w(d,p_2)$ is genuine entangled if there is a neighborhood $[h, 0)$, and for all $\e\in [h, 0)$, $\r_w(d,\e-\frac{1}{d})\otimes_{K_c}\r_w(d,p_2)$ is genuine entangled. Using the claim again and respectively switching system $A,B$ and $C_1,C_2$, we have $\r_w(d,p_1)\otimes_{K_c}\r_w(d,p_2)$ is genuine entangled for any $p_j\in [-1,-1/d)$ if there is a neighborhood $[h, 0)$, and for all $\e\in [h, 0)$, $\r_w(d,\e-\frac{1}{d})\otimes_{K_c}\r_w(d,\e-\frac{1}{d})$ is genuine entangled. So the assertion (i) holds.

(ii) The "only if" part holds. We prove the "if" part. It follows from Lemma \ref{le:distillwerner} (iii) that both $\r_w(d_1,p_1)_{AC_1}$ and $\r_w(d_2,p_2)_{BC_2}$ are one-copy distillable if $p_1,p_2\in[-1,-1/2)$. By the definition of one-copy distillable states both $\r_w(d_1,p_1)_{AC_1}$ and  $\r_w(d_2,p_2)_{BC_2}$ can be projected to two-qubit NPT Werner states, i.e.,  $\r_w(2,p_1)_{AC_1}$ and $\r_w(2,p_2)_{BC_2}$ for $p_j\in[-1,-1/2)$. Following the proof of assertion (i) one can similarly show that $\forall p_j\in[-1,-1/2)$, $\r_w(2,p_1)_{AC_1}\otimes_{K_c}\r_w(2,p_2)_{BC_2}$ is genuine entangled if there is a neighborhood $[h, 0)$, and for all $\e\in [h, 0)$, $\r_w(2,\e-\frac{1}{2})_{AC_1}\otimes_{K_c}\r_w(2,\e-\frac{1}{2})_{BC_2}$ is genuine entangled. So the "if" part holds. Hence the assertion (ii) holds.

This completes the proof.
\end{proof}

Unfortunately one can verify $\r_w(2,\e-\frac{1}{2})_{AC_1}\otimes_{K_c}\r_w(2,\e-\frac{1}{2})_{BC_2}$ is a PPT mixture when $\e\in[-0.2,0)$ from Table \ref{tab:pmixer}. Therefore it is intractable to determine whether $\r_w(2,\e-\frac{1}{2})_{AC_1}\otimes_{K_c}\r_w(2,\e-\frac{1}{2})_{BC_2}$ is genuine entangled for all $\e\in[h,0)$ for a given neighborhood.


To extend Conjcture \ref{cj:aotimesb}, we finally consider a more general construction. We try to construct a $(k+l+n)$-partite genuine entangled state from a $(k+n)$-partite $\d$, and an $(l+n)$-partite state $\g$. The following lemma shows such construction is feasible when $\d$ is a $(k+n)$-partite pure genuine entangled state.

\begin{lemma}
\label{le:getensorany} 
Suppose $\d_{A_1A_2\cdots A_kC_{1,1}C_{1,2}\cdots C_{1,n}}$ is a $(k+n)$-partite pure genuine entangled state, and $\g_{B_1B_2\cdots B_lC_{2,1}C_{2,2}...C_{2,n}}$ is an $(l+n)$-partite state. Let $C_j:=(C_{1,j}C_{2,j}),\quad 1\leq j\leq n$. Then $\d\otimes_{K_c} \g$ is a $(k+l+n)$-partite genuine entangled state of systems $A_1,\cdots, A_k,B_1,\cdots, B_l,C_1,\cdots, C_n$ if and only if $\g$ is an $(l+1)$-partite genuine entangled state of systems $B_1,\cdots, B_l$, and $(C_{2,1}\cdots C_{2,n})$. 
\end{lemma}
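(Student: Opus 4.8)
The plan is to prove both implications by contraposition and to isolate a single pure-state statement that does all the work. Write $C_2$ for the composite system $(C_{2,1}\cdots C_{2,n})$ regarded as one indivisible party, so that the hypothesis on $\g$ is genuine entanglement of the $(l+1)$-partite state on $B_1,\dots,B_l,C_2$ (I take $l\ge1$ throughout, so that this is meaningful). The key auxiliary claim I would establish is: \emph{if $\ket{\d}$ is pure genuine entangled on $A_1\cdots A_k C_{1,1}\cdots C_{1,n}$ and $\ket{\d}\ox_{K_c}\ket{\eta}$ is biseparable across the $k+l+n$ parties, then $\ket{\eta}$ is biseparable across $B_1,\dots,B_l,C_2$.} Granting this claim, the lemma follows quickly in both directions, and it reduces the whole problem to pure states, where biseparability is just factorizability across some bipartition.

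For the \textbf{only if} part I would argue the contrapositive: assume $\g=\sum_j\proj{\psi_j}$ is biseparable over $B_1,\dots,B_l,C_2$, so each $\ket{\psi_j}$ factorizes across a bipartition that puts $C_2$ on one side and at least one $B_i$ on the other. Since $\ox_{K_c}$ leaves every $A_i$ and every $B_i$ as an untouched tensor factor and only fuses $C_{1,j}$ with $C_{2,j}$, each $\ket{\d}\ox_{K_c}\ket{\psi_j}$ splits off precisely those $B_i$'s as a separate factor while the $A$'s and the fused $C_j$'s stay on the other (nonempty) side. Hence each $\ket{\d}\ox_{K_c}\ket{\psi_j}$ is biseparable over the $k+l+n$ parties, and summing gives the biseparable decomposition $\d\ox_{K_c}\g=\sum_j\proj{\d\ox_{K_c}\psi_j}$.

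For the \textbf{if} part, again by contraposition, suppose $\d\ox_{K_c}\g$ is biseparable. Because $\ket{\d}\ox_{K_c}(\cdot)$ is linear and injective (as $\ket{\d}\neq0$), the range obeys $\cR(\d\ox_{K_c}\g)=\ket{\d}\ox_{K_c}\cR(\g)$, so in any decomposition $\d\ox_{K_c}\g=\sum_\mu\proj{\Psi_\mu}$ into biseparable pure states every $\ket{\Psi_\mu}$ has the form $\ket{\d}\ox_{K_c}\ket{\eta_\mu}$ with $\ket{\eta_\mu}\in\cR(\g)$. The auxiliary claim makes each $\ket{\eta_\mu}$ biseparable over $B_1,\dots,B_l,C_2$; then injectivity of $\proj{\d}\ox_{K_c}(\cdot)$ on operators turns $\d\ox_{K_c}\g=\proj{\d}\ox_{K_c}\big(\sum_\mu\proj{\eta_\mu}\big)$ into the honest decomposition $\g=\sum_\mu\proj{\eta_\mu}$, exhibiting $\g$ as biseparable. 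This is exactly what is needed.

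The main obstacle is the auxiliary claim, and I expect the index bookkeeping there to be the delicate part. Writing $\ket{\d}\ox_{K_c}\ket{\eta}=\ket{u}_{\cS}\ket{v}_{\bar{\cS}}$ for the witnessing bipartition, I would read the amplitudes as $\d_{\vec a\,\vec c_1}\,\eta_{\vec b\,\vec c_2}=u_{(\cS)}\,v_{(\bar{\cS})}$, noting that each party $C_j$ carries the paired index $(c_{1,j},c_{2,j})$ and therefore lies entirely on one side of $\cS|\bar{\cS}$. Fixing any nonzero amplitude of $\eta$ and dividing shows that $\d$ factorizes across the bipartition of $A_1\cdots A_k C_{1,1}\cdots C_{1,n}$ induced by $\cS$; since $\d$ is pure genuine entangled, this induced bipartition must be trivial, i.e. all $A_i$'s and all $C_j$'s sit on a single side of $\cS|\bar{\cS}$. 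Feeding this back — now fixing a nonzero amplitude of $\d$ — forces $\eta$ to factorize with $C_2$ grouped together with that side's $B_i$'s against the remaining $B_i$'s, which is precisely biseparability over $B_1,\dots,B_l,C_2$. The crux is verifying that, because each composite $C_j$ is indivisible, the genuine entanglement of $\d$ permits only bipartitions of the big state that separate a nonempty subset of the $B_i$'s from everything else; all the rest is routine.
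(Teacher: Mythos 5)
Your proof is correct, and it takes a genuinely different route from the paper's. The paper proves the ``if'' direction by coarse-graining plus its Lemma \ref{le:rdmge} (i): from a decomposition $\gamma=\sum_j\proj{\phi_j}$ it deduces that $\delta\otimes_{K_c}\gamma$ is genuine entangled in the $(l+1)$-partition $B_1,\dots,B_l,(A_1\cdots A_kC_1\cdots C_n)$, picks one element $\ket{\psi,\phi_1}$ witnessing this, and then excludes further cuts by regarding $\ket{\phi_1}$ as an $n$-partite state with \emph{all} the $B_i$'s merged into a single $C_j$ (or $A_j$) and applying Lemma \ref{le:rdmge} (i) again with $\delta$ as the genuinely entangled factor; the ``only if'' direction is simply asserted from the definition. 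You instead isolate one pure-state claim proved by direct amplitude factorization---purity and genuine entanglement of $\delta$ force any biseparable cut of $\ket{\delta}\otimes_{K_c}\ket{\eta}$ to peel off a nonempty proper subset of the $B_i$'s, whence $\ket{\eta}$ is biseparable over $B_1,\dots,B_l,(C_{2,1}\cdots C_{2,n})$---and you make explicit the range fact $\cR(\delta\otimes_{K_c}\gamma)=\ket{\delta}\otimes_{K_c}\cR(\gamma)$, which the paper uses only implicitly when it treats every decomposition of the product as coming from a decomposition of $\gamma$. Your route buys two things. First, both directions fall out of the single auxiliary claim by clean contraposition, including the ``only if'' that the paper does not argue. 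Second, and more substantively, the coarse-grainings the paper lists exclude only cuts in which either all $A$'s and $C$'s lie on one side, or all $B$'s lie on one side; when $l\ge2$ a cut such as $\{A_1,B_1\}$ versus the rest splits both groups and is visible in none of the listed partitions, so the paper's final ``Therefore'' tacitly needs exactly the reduced-state/amplitude step you supply (equivalently, Lemma \ref{le:rdmge} (i) with the $B_i$'s distributed to merged parties on \emph{both} sides of the cut); your argument covers all cuts uniformly and thus closes that gap. Two small requests: state the standing assumption $l\ge1$ explicitly (the coarse $(l+1)$-partition is degenerate for $l=0$), and in the auxiliary claim note that the side holding all $A$'s and $C$'s is automatically nonempty because $k+n\ge2$, so a nontrivial cut indeed isolates a nonempty, proper subset of the $B_i$'s.
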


\begin{proof}
The "only if" part follows from the definition of genuine entangled states. We prove the "if" part. We first assume $\d=\proj{\psi}$, where $\ket{\psi}$ is a genuine entangled state of systems $A_1,A_2,\cdots, A_k,C_{1,1},C_{1,2},\cdots, C_{1,n}$. We further assume $\g=\sum_j\proj{\phi_j}$, and then $\d\otimes_{K_c} \g=\sum_j\proj{\psi,\phi_j}$, where $\ket{\psi,\phi_j}=\ket{\psi}\ox_{K_c}\ket{\phi_j}$ for any $j$. Since $\g$ is an $(l+1)$-partite genuine entangled state, it follows from Lemma \ref{le:rdmge} (i) that $\d\otimes_{K_c} \g$ is also an $(l+1)$-partite genuine entangled state of systems $B_1,\cdots, B_l$, and $(A_1\cdots A_k C_{1}\cdots C_{n})$. Without loss of generality, we can assume $\ket{\psi,\phi_1}$ is an $(l+1)$-partite genuine entangled state of systems $B_1,\cdots, B_l$, and $(A_1\cdots A_k C_1\cdots C_n)$. Moreover, since $\ket{\psi}$ is a $(k+n)$-partite genuine entangled state, it follows from Lemma \ref{le:rdmge} (i) that $\ket{\psi,\phi_1}$ is a $(k+n)$-partite genuine entangled state of systems $A_1,\cdots, A_k$, and $(B_1\cdots B_lC_1),\cdots,C_n$, and $\ket{\psi,\phi_1}$ is also a $(k+n)$-partite genuine entangled state of systems $A_1,\cdots, (A_kB_1\cdots B_l)$, and $C_1,\cdots,C_n$. Therefore, $\ket{\psi,\phi_1}$ is a $(k+l+n)$-partite genuine entangled state. Hence, by definition, $\d\otimes_{K_c} \g$ is a $(k+l+n)$-partite genuine entangled state. So the "if" parts holds.

This completes the proof.
\end{proof}

\section{conclusion}
\label{sec:con}

In this paper, we have proposed another product of two states based on the Kronecker product, denoted by $\a\ox_{K_c}\b$. We ask whether two GME states $\a$ and $\b$ can guarantee the product $\a\ox_{K_c}\b$ is still a GME state, which has been formulated by Conjecture \ref{cj:aotimesb}. We mainly investigate Conjecture \ref{cj:aotimesb}, and have derived some partial results to support this conjecture. The motivation of our work is to present a method to systematically construct GME states of more parties. For example, Theorem \ref{le:conjmulti} supports that it is feasible to construct an $(n+2)$-partite genuine entangled state from two $(n+1)$-partite genuine entangled states using the proposed product $\a\ox_{K_c}\b$. Due to the close connection between $\a\ox_{K_c}\b$ and $\a\ox_{K}\b$, we also have characterized the multipartite entanglement of $\a\ox_K\b$ as by-products. When $\a$ and $\b$ are both pure states, we have shown the separability of $\a\ox_K\b$ is completely determined by the complete partitions of $\a$ and $\b$ which we propose. When $\a$ and $\b$ are both mixed states, we have derived some sufficient conditions to guarantee $\a\ox_K\b$ is a GME state.

There is a direct open problem from this paper. That is to keep studying Conjecture \ref{cj:aotimesb} for more general cases. We believe it is true and carry out some steps forward proving Conjecture \ref{cj:aotimesb} with Lemmas \ref{le:conjequiv} - \ref{le:getensorany}. However, it would also be very interesting if a counterexample really exists, because it shows the physical difference between bipartite and tripartite genuine entanglements.

\section*{Acknowledgments}

We want to show our deepest gratitude to the anonymous referees for their careful work and useful suggestions. LC and YS were supported by the NNSF of China (Grant No. 11871089), and the Fundamental Research Funds for the Central
Universities (Grant Nos. KG12080401, and ZG216S1902).

\appendix
\section{The detection of genuine entanglement for Lemma \ref{le:npt}}
\label{sec:wernerpptmixer}

In this section we further investigate Lemma \ref{le:npt}. It is known that the set of PPT mixtures is a very good approximation to the set of biseparable states, and the set of PPT mixtures can be fully characterized with the method of SDP \cite{pmix10}. In the following we will verify whether $\r_w(2,\e-1/2)_{AC_1}\otimes_{K_c}\r_w(2,\e-1/2)_{BC_2}$ is a tripartite genuine entangled state for $\e<0$. 

If $\rho$ is not a PPT mixture then there exists a fully decomposable witness $W$ that detects $\rho$ \cite{pmix10}. To find a fully decomposable witness for a given state, the convex optimization technique SDP is essential. Given a multipartite state $\rho$, the search is given by 
\beq
\label{eq:covop}
\min \tr(W\rho)
\eeq
such that $\tr(W)=1$ and for all $M$:
\beq
\label{eq:covop-1}
W=P_M+Q_M^{T_M}, \quad Q_M\geq 0, \quad P_m\geq 0.
\eeq
If the minimum in Eq. \eqref{eq:covop} is negative, $\rho$ is not a PPT mixture, and thus is a GME state. For more details, one can refer to Ref.\cite{pmix10}. In this paper we use the Matlab code called $PPTMixer$ \cite{pptmixer} to detect the genuine entanglement from the perspective of PPT mixture, and the optimization of the SDP Eq. \eqref{eq:covop} can be solved by using the Matlab parser YALMIP with the solvers SEDUMI or SDPT3.

Before we do the numerical tests we formulate the expression of $\r_w(2,p_1)_{AC_1}\otimes_{K_c}\r_w(2,p_2)_{BC_2}$ first. We write $\r_w(2,p_1)_{AC_1}\otimes_{K_c}\r_w(2,p_2)_{BC_2}$ in the spectral decomposition as follows.
\beq
\label{eq:sigma2tqubit}
\bal
\sigma_{ABC}:=\r_w(2,p_1)_{AC_1}\otimes_{K_c}\r_w(2,p_2)_{BC_2}&=\sum_{j=1}^{16}\proj{\psi_j},
\eal
\eeq
where
\beq
\label{eq:sigma2tqubit-2}
\bal
\ket{\psi_{1}}&=\sqrt{(1+p_1)(1+p_2)} \ket{000}, \quad \ket{\psi_{2}}=\sqrt{(1+p_1)(1+p_2)} \ket{011}, \\
\ket{\psi_{3}}&=\sqrt{(1+p_1)(1+p_2)} \ket{102}, \quad \ket{\psi_{4}}=\sqrt{(1+p_1)(1+p_2)} \ket{113}, \\
\ket{\psi_5}&=\sqrt{\frac{(1+p_1)(1-p_2)}{2}}(\ket{010}-\ket{001}), \quad  \ket{\psi_{6}}=\sqrt{\frac{(1+p_1)(1+p_2)}{2}}(\ket{010}+\ket{001}),  \\
\ket{\psi_{7}}&=\sqrt{\frac{(1+p_1)(1-p_2)}{2}}(\ket{112}-\ket{103}), \quad \ket{\psi_{8}}=\sqrt{\frac{(1+p_1)(1+p_2)}{2}}(\ket{112}+\ket{103}),  \\
\ket{\psi_9}&=\sqrt{\frac{(1-p_1)(1+p_2)}{2}}(\ket{100}-\ket{002}), \quad \ket{\psi_{10}}=\sqrt{\frac{(1+p_1)(1+p_2)}{2}}(\ket{100}+\ket{002}),  \\
\ket{\psi_{11}}&=\sqrt{\frac{(1-p_1)(1+p_2)}{2}}(\ket{111}-\ket{013}), \quad \ket{\psi_{12}}=\sqrt{\frac{(1+p_1)(1+p_2)}{2}}(\ket{111}+\ket{013}),  \\
\ket{\psi_{13}}&=\sqrt{\frac{(1-p_1)(1-p_2)}{4}}(\ket{110}-\ket{101}-\ket{012}+\ket{003}),  \\
\ket{\psi_{14}}&=\sqrt{\frac{(1-p_1)(1+p_2)}{4}}(\ket{110}+\ket{101}-\ket{012}-\ket{003}),  \\
\ket{\psi_{15}}&=\sqrt{\frac{(1+p_1)(1-p_2)}{4}}(\ket{110}-\ket{101}+\ket{012}-\ket{003}),  \\
\ket{\psi_{16}}&=\sqrt{\frac{(1+p_1)(1+p_2)}{4}}(\ket{110}+\ket{101}+\ket{012}+\ket{003}).  
\eal
\eeq

Finally we show our numerical results in Table \ref{tab:pmixer}. From Table \ref{tab:pmixer} one can verify $\sigma_{ABC}$ is tripartite genuine entangled states when $p_1, p_2\ra -1^+$. However, when $p_1, p_2\ra 0^-$ we cannot detect the genuine entanglement of $\sigma_{ABC}$ since it is a tripartite PPT mixture. It is essential for Lemma \ref{le:npt} to determine whether $\sigma_{ABC}$ is genuine entangled for all $\e\in[h,0)$ for a given neighborhood. So this is still an open problem for the future study.

\begin{table}
\normalsize
\begin{threeparttable}
\centering
\caption{Detection of genuine entanglement for $\sigma_{ABC}$.}
\begin{spacing}{1.5}
\setlength{\tabcolsep}{2mm}{
\begin{tabular}{cccccccccc}
\hline
\hline
 \diagbox [width=9em]{$p_1$}{min}{$p_2$} & -1.000 & -0.900 & -0.800 & -0.700 &  -0.650 & -0.600  & -0.550 & -0.510 & -0.501 \\
 \hline
 -1.000 & -0.1250 & -0.0909 & -0.0625 & -0.0385 &  -0.0278 &  -0.0179 & -0.0086 & -0.0017 & -0.0002 \\
 -0.900 & -0.0909 & -0.0630 & -0.0398 & -0.0201 &  -0.0114 &  -0.0032 & 0.0043 & 0.0075 & 0.0076 \\
 -0.800 & -0.0625 & -0.0398 & -0.0208 &  -0.0048 &  0.0023 &  0.0089 & 0.0129 & 0.0137 & 0.0139 \\
 -0.700 & -0.0385 & -0.0201 & -0.0048 &  0.0081 &  0.0139 &  0.0165 & 0.0179 & 0.0190 & 0.0192 \\
 -0.660 & -0.0299 & -0.0131 & -0.0009 &  0.0128 &  0.0164 &  0.0181 & 0.0197 & 0.0209 & 0.0211 \\
 -0.620 & -0.0217 & -0.0064 & 0.0063 &  0.0159 &  0.0178 &  0.0197 & 0.0214 & 0.0226 & 0.0229 \\
 -0.580 & -0.0141 & -0.0002 & 0.0114 &  0.0171 &  0.0192 &  0.0211 & 0.0229 & 0.0243 & 0.0246 \\
 -0.540 & -0.0068 & 0.0058 & 0.0131 &  0.0182 &  0.0204 &  0.0225 & 0.0244 & 0.0259 & 0.0262 \\
 -0.505 & -0.0008 & 0.0075 & 0.0138 &  0.0191 &  0.0215 &  0.0237 & 0.0257 & 0.0272 & 0.0276 \\
\hline
\hline
\end{tabular}}
\end{spacing}
\begin{tablenotes}
\item[1] $p_1$ and $p_2$ are the two parameters in Eq. \eqref{eq:sigma2tqubit}.

\item[2] min is the optimization result of Eq. \eqref{eq:covop} and correct to four decimal places.
\end{tablenotes}
\label{tab:pmixer}
\end{threeparttable}
\end{table}

\bibliographystyle{unsrt}
\bibliography{yiquanph}

\end{document}